\documentclass[11pt, reqno]{amsart}
	\usepackage[margin=1.3in]{geometry}

	\usepackage[utf8x]{inputenc}
	\usepackage{blindtext}
	\usepackage{graphicx}
	\usepackage{amsmath, amsthm, amssymb}
	\usepackage[english]{babel}
	\usepackage{marvosym}
	\usepackage{graphics}
	\usepackage{natbib}
	\usepackage{tikz}
	\usepackage{pgfplots}
	\usepgfplotslibrary{fillbetween}
	\usetikzlibrary{decorations.markings}
	\usepackage[colorlinks=true,
            linkcolor=blue,
            citecolor=blue,
            urlcolor=blue]{hyperref}
	\usepackage{mathtools}
	\usepackage{blindtext}
	\usepackage{comment}
	\usepackage{bm}
	\usepackage{subcaption}
	\newcommand\restr[2]{{% we make the whole thing an ordinary symbol
	  \left.\kern-\nulldelimiterspace % automatically resize the bar with \right
	  #1 % the function
	  \vphantom{\big|} % pretend it's a little taller at normal size
	  \right|_{#2} % this is the delimiter
	  }}

	\DeclareMathOperator{\divv}{div} 
	\DeclareMathOperator{\Poi}{Poi}

	\newtheorem{theorem}{Theorem}
	\newtheorem{corollary}{Corollary}
	\newtheorem{lemma}{Lemma}
	\newtheorem{definition}{Definition}

	\theoremstyle{definition}

	\newtheorem{remark}{Remark}
	
	\newtheorem*{example*}{Example}

\begin{document}

 % HEADER
 \title[Phase transitions in generalized XY models]{Phase transitions in generalized XY models}

 \makeatletter
 \@namedef{subjclassname@2020}{\textup{2020} Mathematics Subject Classification}
 \makeatother

 \subjclass[2020]{Primary 82B20; secondary 82B26}
 
 \author{Fabio Plaga}
 \address{Technische Universit\"at Wien}
 \email{fabio.plaga@tuwien.ac.at}
 
 \keywords{XY model, height functions, BKT transition, phase diagram, statistical mechanics}
 
\begin{abstract}
	We associate height function models to a range of generalized two-dimensional XY models and prove that delocalisation of the height function model rules out exponential decay of the nematic order parameter in the primal spin model.
	The argument is based on a generalized loop representation which relates the variance of a height difference between two faces to the nematic order parameter.  
	This generalizes the results by van Engelenburg and Lis \cite{lis:2023}. 
	The range of models also includes the much-studied generalized XY model as proposed by Korshunov and independently by Lee and Grinstein, where neighbouring spins interact ferromagnetically favouring parallel alignment and nematically favouring parallel or antiparallel alignment. 
	Moreover, we recover the existence of a nematic region by a classical comparison argument of Pfister \cite{pfister:1982}.
\end{abstract}

\maketitle

\setcounter{tocdepth}{1}

	\section{Introduction}
	Generalizations of the classical XY model were first introduced in the 1980s by Korshunov and independently by Lee and Grinstein among others in \cite{korshunov:1986}, \cite{lee:1985}, \cite{sluckin:1988} and \cite{carpenter:1989}.
	In these models, classical two-component spins interact both ferromagnetically and nematically.
	In its simplest form, the Hamiltonian of such a model is parametrised by 
	\begin{equation}
	\label{eq:H_Delta}
	H_\Delta
	= 
	- 
	\sum_{u \sim v} 
	\left[
	\Delta
	\cos (\theta_u - \theta_v) 
	+ 
	(1 - \Delta ) 
	\cos (2 (\theta_u - \theta_v) )
	\right]
	, 
	\ 
	0 \leq \Delta \leq 1,
	\end{equation}
	where $\theta_v \in [0, 2 \pi)$ describes the spin orientation at vertex $v$ and the sum ranges over all neighbouring vertices of a finite graph $G = (V,E)$.
	We write $u \sim v$ if $uv \in E$. 
	The $2 \pi$-periodic term represents the \textit{ferromagnetic} coupling and favours parallel alignment of neighbouring spins, whereas the $\pi$-periodic term is the \textit{nematic} coupling and favours parallel and antiparallel alignment of neighbouring spins.
	\par 
	From a physical perspective, the generalized XY model amounts to a basic example of studying vortex excitations and their unbinding and therefore possible phase transitions at finite temperature, where, in addition to the integer vortices of the ferromagnetic coupling, fractional vortex excitations are present due to the nematic coupling. 
	In our example \eqref{eq:H_Delta} the $\pi$-periodic coupling term gives rise to half-vortices, where neighbouring half-vortices are connected by domain-wall strings with finite string tension.
	\par 
	Generalized XY models with nematic interaction have been used in the study of high-temperature superconductivity \cite{hlubina:2007}, arrays of unconventional Josephson junctions \cite{korshunov:2010} and liquid crystal films \cite{lee:1985} to name a few.
	\par 
	The generalization enriches the phase diagram compared to the standard XY model.
	Let $\beta \in [0, \infty]$ be the inverse temperature of the system.
	In order to characterise the phases in the $\Delta-\beta$ plane, one considers the following two correlation functions,
	\begin{align}
	\label{eq:G_1} 
	G_1(v) 
	& = 
	\langle \cos (\theta_o - \theta_v) \rangle^{H_\Delta}_{\Gamma, \beta} 
	\coloneqq 
	\lim_{G \nearrow \Gamma} 
	\langle \cos (\theta_o - \theta_v) \rangle^{H_\Delta}_{G, \beta} , \ \text{and}
	\\ 
	\label{eq:G_2}
	G_2(v)
	& =
	\langle \cos ( 2 (\theta_o - \theta_v)) \rangle^{H_\Delta}_{\Gamma, \beta} 
	\coloneqq 
	\lim_{G \nearrow \Gamma} 
	\langle \cos ( 2 (\theta_o - \theta_v)) \rangle^{H_\Delta}_{G, \beta},
	\end{align} 
	where $\Gamma$ is a
	planar and locally finite graph that is invariant under the action of a lattice isomorphic to $\mathbb{Z}^2$,
	$o$ is a fixed vertex and 
	$\langle \cdot \rangle_{G, \beta}^{H_\Delta}$ denotes the expectation under $H_\Delta$ at inverse temperature $\beta$ on $G$, see \eqref{eq:gibbs meas} for a precise definition.
	The sequences on the right-hand side of 
	\eqref{eq:G_1} and \eqref{eq:G_2} are non-decreasing by Ginibre inequalities and therefore the limits exist and are well-defined, see Remark \ref{re:gin}.
	\par 
	$G_1$ is sensitive to the ferromagnetic order, whereas $G_2$ is sensitive to both ferromagnetic and nematic order. 
	$G_1$ (respectively $G_2$) will be referred to as the \textit{ferromagnetic} (respectively \textit{nematic}) \textit{order parameter}.
	\par 
	 	Heuristically, when $\Delta \approx 1$ the model is a perturbation of the standard XY model, where the well-studied \textit{Berezinskii--Kosterlitz--Thouless} (BKT) phase transition \cite{kosterlitz:1973} occurs.
	Therefore one expects a \textit{disordered} regime at high temperatures, in which
	$G_1$, respectively 
	$G_2$, decay exponentially in the distance,
	meaning
	\[ 
	G_i(v) \leq C e^{- c |v|}, 
	\ 
	\text{for some 
	$c = c(\Gamma,\beta, \Delta, i),
	C = C(\Gamma,\beta, \Delta, i) >0$},
	\] 
	for $i = 1,2$
	and a \textit{BKT} regime at low temperatures, in which $G_1$, respectively $G_2$,
	decay \textit{slowly}
	at an inverse temperature-dependent power-law, 
	that is
	\[ 
	G_i(v) 
	\sim 
	|v|^{-\eta_i(\beta)}, 
	\]
	for $i = 1,2$.
	Here, $|v|$ denotes the geodesic graph distance of $v$ to the fixed vertex $o$.
	In particular, in the BKT phase $G_1$ and $G_2$ do not decay exponentially.
	\par  
	When $\Delta \approx 0$ one expects a third \textit{nematic} phase between the disordered and BKT phases. 
	It is characterized by exponential decay of $G_1$ and absence of exponential decay of $G_2$. 
	\\
	This behaviour is expected because on the nematic line corresponding to $\Delta = 0$,  
	\[ 
	\langle \cos ( \theta_u - \theta_v ) 
	\rangle^{H_0}_{\Gamma, \beta}
	= 
	0 ,
	\]
	for $u \neq v´$ by symmetry,
	whereas 
	\[ 
	\langle 
	\cos (2 ( \theta_u - \theta_v) ) 
	\rangle^{H_0}_{\Gamma, \beta}
	= 
	\langle 
	\cos ( \theta_u - \theta_v) 
	\rangle^{H_1}_{\Gamma, \beta},
	\]
	by a change of variables. 
	\\ 
	This prediction was observed numerically in \cite{carpenter:1989}, \cite{hubscher:2013} and \cite{sluckin:1988} using Monte Carlo methods.
	See Figure \ref{fig:schematic fig} for a schematic view of the predicted enriched $\Delta - \beta$ phase diagram.

	\begin{figure}[t] 
	\begin{tikzpicture}
	  \begin{axis}[
	    xlabel= $\Delta$,
	    ylabel= $\beta$,
	    ylabel style={rotate = 270},
	    axis lines* = left,
	    width=10cm,
	    height=8cm,
	    xmin = 0, 
	    xmax = 1, 
	    ymin = 0, 
	    ymax = 5,
	    x dir=reverse,
	   ytick={0,1,...,5},        
	minor y tick num=4,         % nine minor ticks → spacing 0.01
	ymajorgrids=true,
	  ]
	    % two phase boundaries
	    \addplot[name path=curve1, smooth, thick] coordinates {
	        (1, 1.1136)
	        (0.3, 1.5385)
	        (0, 10)};
	    \addplot[name path=curve2, smooth, thick] coordinates {
	        (0.3, 1.5385)
	        (0.1, 1.2903)
	        (0, 1.1136)};
	     \addplot[name path=top, smooth, forget plot] coordinates {(1,5) (0,5)};
	    \addplot[name path=bottom, smooth, forget plot] coordinates {(1,0) (0,0)};
	    % region labels	
	    \addplot [white!30, opacity =0.7] fill between[of=curve1 and top];
	     \addplot [red!30, opacity = 0.7] fill between[of=curve1 and curve2, soft clip={domain=0.3:0}];
	    \addplot [blue!30, opacity = 0.7] fill between[of=curve1 and bottom, soft clip={domain=1:0.3}];  
	    \addplot [blue!30, opacity = 0.7] fill between[of=curve2 and bottom, soft clip={domain=0.3001:0}];

	    \node at (axis cs:0.6, 3.5) {BKT};
	    \node at (axis cs:0.25, 0.5) {disordered};
	    \node at (axis cs:0.1, 2.5) {nematic};
	  \end{axis}
	\end{tikzpicture}
	\caption{Schematic $\Delta-\beta$ phase diagram adapted from simulations of \cite{carpenter:1989}, \cite{hubscher:2013}, \cite{sluckin:1988}} 
	\label{fig:schematic fig}
	\end{figure}

	One may also consider further generalizations of the XY model. 
	A natural class of Hamiltonians is given by the finite non-negative Fourier sums,
	\begin{equation}
	\label{eq:H Fourier}
	H = - 
	\sum_{u \sim v} 
	\sum_{k = 1}^q 
	J_k 
	\cos ( k (\theta_u - \theta_v) )
	\ 
	, q \in \mathbb{N}
	, 
	\ J_k \geq 0.
	\end{equation}
	Each coupling term corresponding to a positive coupling constant $J_k$ favours alignment of neighbouring spins up to modulo $2 \pi / k$.
	$G_1$ and $G_2$ are defined analogously to \eqref{eq:G_1} and \eqref{eq:G_2}.
	\\
	Some of these are investigated in the physics literature, see for example \cite{canova:2016}, \cite{poderoso:2011}, \cite{vzukovivc:2018multiple} and \cite{vzukovivc:2018}.
	This leads to even richer phase diagrams.	
	\par 
	The standard XY model has a dual model of integer-valued height functions 
	$\varphi : V^\dagger \rightarrow \mathbb{Z}$ defined on the respective dual graph 
	$G^\dagger = (V^\dagger, E^\dagger)$.
	\par 
	It is obtained by introducing the \textit{random current expansion} of the standard XY model and observing  a one-to-one correspondence between sourceless 
	integer 
	currents 
	and divergence-free 
	antisymmetric 
	flows. 
	Such flows are in bijection with integer-valued \textit{height functions} defined on the faces of $G$ and set to zero on the outer face.
	Summing over currents with a prescribed net flow induces the dual model of integer-valued height functions.
	\par 
	The height function model in infinite volume can exhibit two phases. 
	Depending on the inverse temperature $\beta \in [0, \infty]$, the variance of the height difference 
	$\varphi(x) - \varphi(y)$ between two faces $x,y \in V^\dagger$ remains either uniformly bounded, which we refer to as the \textit{localised} phase, or is unbounded as $|x-y| \rightarrow \infty$, the \textit{delocalised} phase.
	For precise definitions see Section \ref{sec:current exp}. 
	\par 
	In the standard XY model van Engelenburg and Lis \cite{lis:2023} used a loop representation to relate the variance of the height difference between two faces in the dual height function model with the correlation function $G_1$ in the primal spin model.
	Using this loop representation they deduced that delocalisation of the height function model rules out exponential decay of $G_1$.
	Moreover they showed that the height function model delocalises at finite inverse temperature $\beta_0$ by checking a sufficient condition given by Lammers in \cite{lammers:2022height}.
	That is, there exists some $\beta_0 \in (0, \infty)$ such that the height function model is delocalised for all $\beta \geq \beta_0$.
	To prove the existence of a phase of exponential decay is standard.
	Combined with the previous implication this yields the existence of a phase transition between exponential and non-exponential decay of $G_1$ in the standard XY model at finite inverse temperature. 
	\par 
	Aizenman et al. obtained the same result in \cite{aizenman:2021} by relating integer-restricted Gaussian fields to the BKT phase of two-component spin models, including the standard XY model.
	\subsection*{Main results}
	The main purpose of the present paper is to extend this result beyond the standard XY interaction.
	 We introduce an enhanced random current expansion, extend the height function formalism and define a generalized loop representation adapted to Hamiltonians with nematic coupling of the form 
	\begin{equation}
	\label{eq:H tilde}
	\tilde{H} = 
	 - 
	\sum_{u \sim v} 
	\sum_{k = 1}^q 
	\tilde{J}_k
	\cos^k ( ( \theta_u - \theta_v) ), 
	\ 
	\text{with $ q\in \mathbb{N}$ and $\tilde{J}_k \geq 0$}.
	\end{equation}
	At the level of Gibbs measures the generalized XY model with a Hamiltonian of type \eqref{eq:H tilde} is a subclass of generalized XY models with Hamiltonians of type 
	\eqref{eq:H Fourier} but crucially includes the Hamiltonians $H_\Delta$, $0 \leq \Delta \leq 1$ because of the identity 
	$\cos (2 \theta) = 2 \cos (\theta)^2 - 1$. 
	See Section \ref{sec:definition} for further  details.
	\par 
	We prove an identity relating the variance of the height difference between two faces to the nematic order parameter $G_2$ for Hamiltonians of type \eqref{eq:H tilde} and combine this with a delocalisation argument of the height function model to deduce the following generalized result: 
	
	\begin{theorem} 
	\label{th:main theorem 1}
	Consider the generalized XY model on $\Gamma$ with a Hamiltonian of the form \eqref{eq:H tilde} with some $\tilde{J}_k > 0$. 
	At every inverse temperature $\beta$ for which the associated dual height function model is delocalised in the sense of \eqref{eq:deloc var}, $G_2$ does not decay exponentially. 
	Moreover, there exists $\beta_0 < \infty$ such that height function model is delocalized for all $\beta \geq \beta_0$. 
	\end{theorem}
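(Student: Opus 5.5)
The plan follows van Engelenburg and Lis \cite{lis:2023} in three stages.

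\emph{Stage 1: an enhanced random current expansion.} Expand $\exp\bigl(\beta\sum_k \tilde J_k\cos^k(\theta_u-\theta_v)\bigr)$ edge by edge. Each power $\cos^k(\theta_u-\theta_v)$ is written as a sum of $k$ half-sums of the form $\tfrac12(e^{i(\theta_u-\theta_v)}+e^{-i(\theta_u-\theta_v)})$. This gives a multigraph in which every edge $uv$ carries parallel copies, and on each copy sits an integer current with nonnegative weights. Integrating out the spins forces zero divergence at every vertex, which yields sourceless currents. Orienting the currents turns them into divergence-free integer flows on $G$. By planarity these flows are in bijection with integer height functions $\varphi$ on $V^\dagger$ that vanish on the outer face. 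The weight is a product of edge weights depending only on the net flow across each dual edge, and summing over the multiplicities defines the dual height function model. The key point is that all weights are nonnegative because $\tilde J_k\ge 0$. With insertions $\cos(2(\theta_o-\theta_v))$ one gets currents with sources $\pm2$ at $o$ and $v$.

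\emph{Stage 2: the loop identity.} Decompose the flow (equivalently, the current on the multigraph) into oriented loops. For faces $x,y$, let $\gamma$ be a dual path from $x$ to $y$. Then $\varphi(x)-\varphi(y)$ is the net flux through $\gamma$, i.e. the sum over loops of their signed winding contributions. Squaring and taking expectations, the cross terms between distinct loops cancel by the symmetry of reversing a single loop's orientation. This leaves $\mathrm{Var}(\varphi(x)-\varphi(y))$ expressed through probabilities that a single loop separates $x$ from $y$.

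A switching-lemma-type argument, in the spirit of Lemma-style identities in \cite{lis:2023}, then rewrites the weighted count of configurations with one distinguished loop through a given edge as a two-point function. The expected outcome is a bound
\[
\mathrm{Var}(\varphi(x)-\varphi(y))\le C\sum_{u\in\partial x}\sum_{v\in\partial y} G_2(u,v),
\]
or at least summed versions of it. The correlation is $G_2$ rather than $G_1$ because cutting a loop and reopening it creates sources $\pm1$ on a doubled structure. More precisely, in the presence of the $\cos^k$ couplings with $k\ge2$, an unambiguous loop decomposition is available only modulo parity, so the correct observable is $\cos^2$, equivalently $G_2$ after using $\cos 2\theta=2\cos^2\theta-1$.

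I expect this step to be the main obstacle. The difficulty is choosing the loop decomposition of the multigraph currents so that the reversal symmetry holds and the resulting sum is controlled by $G_2$ and not by a less tractable object. I would first try a uniformly random pairing of incoming and outgoing current units at each vertex, as in \cite{lis:2023}, applied to the multigraph.

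\emph{Stage 3: conclusion.} If $G_2$ decayed exponentially, then by translation invariance under the $\mathbb Z^2$-action the double sum above would be uniformly bounded in $x,y$. That contradicts the delocalisation condition \eqref{eq:deloc var}, which proves the first claim. For the second claim, I would check Lammers' sufficient criterion \cite{lammers:2022height} for delocalisation of integer height functions. This requires verifying that the dual edge potential is symmetric and convex, or log-concave in the appropriate sense, in the net flow variable. It also requires verifying that as $\beta\to\infty$ the weights become flat enough, namely that the ratio of the weights at flow $1$ and flow $0$ tends to $1$. Both properties should follow from Bessel-type asymptotics of the edge weight, which is a mixture of modified Bessel functions $I_n(\beta\tilde J_k)$-type terms, with the mixture taken over $k$ with $\tilde J_k>0$. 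The log-concavity of such mixtures needs a separate argument, for example via Turán-type inequalities for each component, combined with the total positivity preserved under the convolution structure.
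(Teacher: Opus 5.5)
\textbf{First claim.} The inequality you aim for in Stage 2, $\mathrm{Var}(\varphi(x)-\varphi(y))\le C\sum_{u\in\partial x}\sum_{v\in\partial y}G_2(u,v)$, cannot hold. By periodicity its right-hand side is bounded uniformly in $x,y$ however $G_2$ behaves. Stage 3 would then give localisation at every $\beta$, which contradicts the second half of the theorem.

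The error is that a loop separating $x$ from $y$ need not visit $\partial x$ or $\partial y$. The paper instead takes a bi-infinite self-avoiding primal path $L$ through an edge of $\bm o$, split into $L_+$ and $L_-$. Every loop contributing to the winding around $\bm o$ must travel from $L_+$ to $L_-$, which gives $\mathbb{E}|h(\bm o)|=\bm{E}|W_{\bm o}|\le\sum_{a\in L_+,b\in L_-}\bm{E}[m_{a,b}]$. Here $m_{a,b}$ counts loop pieces from $a$ to $b$ for the cut set $S=\{a,b\}$. Exponential decay is used only to make $\sum_{a\in L_+,b\in L_-}G_2(a-b)^{1-\epsilon/2}\le C\sum_n n e^{-c'n}$ finite over this infinite cut.

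Two further ingredients are missing from your sketch:
\begin{itemize}
\item A H\"older step $\bm{E}[m_{a,b}]\le C_p\deg(a)\,\bm{P}(m_{a,b}\ge1)^{1/p}$. It rests on graph-independent moment bounds for the edge multiplicities at $a$, and it turns expected counts into connection probabilities. Working with the variance, as you propose, would instead require second moments of crossing counts, which the two-point switching identity does not control.
\item The switching identity $\langle\sigma_a^2\bar\sigma_b^2\rangle=\bm{E}[m_{a,b}/(m_{a,b}+1)]$, obtained by reversing one $a\to b$ piece. Reversal changes the divergence by $\pm2$ at the endpoints; this, not a parity ambiguity in the loop decomposition, is why $G_2$ rather than $G_1$ appears. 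Reversal preserves weights precisely because expanding $\cos^2$ produces doubled edges in all four orientations.
\end{itemize}

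\textbf{Second claim.} Your plan to verify Lammers' criterion directly for $\mathcal{V}^{\tilde H}_\beta$ fails, because this potential is in general not convex. For \eqref{eq:H example} the edge weight is $w(\ell)=\sum_m I_{\ell-2m}(\beta J_1)I_m(\beta J_2/2)$. For fixed $\beta J_2$ and $\beta J_1\to0$ one has $w(1)\to0$, while $w(0)\to I_0(\beta J_2/2)$ and $w(2)\to I_1(\beta J_2/2)$. Hence $w(1)^2<w(0)w(2)$, so log-concavity fails. Total positivity does not rescue this: convolution preserves log-concavity only if both factors are log-concave, and the $\cos^2$ factor, as a sequence in the net flow, lives on the even integers and is not log-concave.

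There are two further problems with the direct route. If only even powers occur, then $w(1)=0$ and the excitation condition fails without rescaling. Also, the criterion of \cite{lammers:2022height} is stated for graphs of degree at most three, so a general $\Gamma$ needs an extra reduction.

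The paper, via Theorem 5.3 of \cite{engelenburg:2025}, never needs convexity of $\mathcal{V}^{\tilde H}_\beta$. It keeps only the lowest Fourier mode $k$ with $J_k>0$; after a change of variables this is the standard XY model, whose dual potential is convex, symmetric, supergaussian and excited for large $\beta$. A generalized Ginibre inequality then shows that the remaining couplings can only increase $\mathbb{E}[h(\bm o)^2]$. Finally, splitting, gluing and reducing edges passes from degree three to $\Gamma$.
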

	
	It is standard to show the existence of a regime 
	$\beta \in [0, \beta_e)$, $\beta_e >0$, in which $G_2$ decays exponentially. 
	One can for example adapt the proof of Section 2.4 of \cite{peled:2019}. 
	In particular, this shows that $G_2$ undergoes a phase transition between a high temperature phase of exponential decay, and a low-temperature phase of no exponential decay. 
	
	\begin{remark}
	\label{re:stand XY} 
	In the standard XY model on $\Gamma$, absence of exponential decay of $G_1$ implies a uniform power-law lower bound on $G_1$ of the form
	\begin{equation}
	\label{eq:power law}
	G_1(v) \geq \frac{c(\Gamma)}{|v|},
	\end{equation}
	for $v \neq 0$ and some $c(\Gamma) > 0$.
	For example, $c(\mathbb{Z}^2) = \frac{1}{8}$.
	This is proven in \cite{lis:2023} using the Messager--Miracle--Sole (MMS) inequality \cite{messager:1977}. 
	\par 
	Also, in the standard XY model one can show that the phase transition is \textit{sharp}, in the sense that there is no other type of decay than exponential and power-law decay. 
	The proof uses the \textit{Simon--Lieb inequality} \cite{simon:1980} \cite{lieb:1980}.	 
	This way it was shown in \cite{lis:2023} that indeed a sharp BKT phase transition occurs in the standard XY model. 
	\par
	However, we note that both the MMS inequality and the Simon--Lieb inequality are not available for all Hamiltonians of type \eqref{eq:H tilde}.
	\end{remark} 
	
	\begin{remark} 
	In a later work by van Engelenburg and Lis \cite{engelenburg:2025}, the spin-height duality for positive definite integer valued-valued height potentials and $\mathcal{O}(2)$-symmetric spin models has been developed in considerable generality. 
	In particular, they obtained the implication that height function delocalisation implies non-exponential decay of an explicit, but non-standard, correlation function. 
	The advantage of our result is that it transparently yields non-exponential decay of $G_2$, the relevant nematic order parameter.
	\end{remark} 
	
	\begin{remark} 
	We remark that in the generalized XY model on $\Gamma$ with Hamiltonian $\tilde{H} \in \tilde{\mathcal{H}}$,
	$G_1$ and $G_2$ decay at least algebraically in the distance, i.e. for every $\beta > 0$ there exists 
	$c = c(\Gamma, \beta), \mu =  \mu(\Gamma, \beta)>0$ such that 
	\[ 
	| 
	G_i(v)
	| 
	\leq 
	c |v|^{-\mu}, 
	\ 
	\text{for} 
	\ 
	i = 1,2.
	\]
	The statement for $G_1$ can be found in \cite{gagnebin:2014} and is an adaptation of the McBryan-Spencer approach for the XY model \cite{mcbryan:1977}. 
	A straightforward adaptation of the proof in \cite{gagnebin:2014} yields the statement for $G_2$.
	\end{remark}	
	
	For the Hamiltonians 
	$H_\Delta$, $0 \leq \Delta \leq 1$, we recover the following statement.
	
	\begin{corollary} 
	\label{cor:corollary}
	Consider the generalized XY model on $\Gamma$ with Hamiltonian 
	$H_\Delta$, $0 \leq \Delta \leq 1$.
	Delocalisation of the associated dual height function model at inverse temperature $\beta$ rules out exponential decay of $G_2$ at the same $\beta$.
	Moreover, there exists $\beta_0 < \infty$ such that height function model is delocalized for all $\beta \geq \beta_0$. 
	\end{corollary}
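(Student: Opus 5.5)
The corollary should follow from Theorem~\ref{th:main theorem 1} once $H_\Delta$ is rewritten in the form \eqref{eq:H tilde}. Using $\cos(2\theta) = 2\cos^2(\theta) - 1$, I would write
\[
H_\Delta = - \sum_{u \sim v} \left[ \Delta \cos(\theta_u - \theta_v) + 2(1-\Delta)\cos^2(\theta_u - \theta_v) \right] + (1-\Delta)|E|.
\]
The additive constant $(1-\Delta)|E|$ does not depend on the spin configuration. It therefore cancels between the Boltzmann weight and the partition function, so the finite-volume Gibbs measures \eqref{eq:gibbs meas} for $H_\Delta$ and for
\[
\tilde{H} = -\sum_{u\sim v}\left[\tilde{J}_1\cos(\theta_u-\theta_v) + \tilde{J}_2\cos^2(\theta_u-\theta_v)\right], \qquad \tilde J_1=\Delta,\ \tilde J_2 = 2(1-\Delta),
\]
coincide for every finite $G$ and every $\beta$. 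Consequently the finite-volume expectations of $\cos(2(\theta_o-\theta_v))$ agree. Taking the monotone limit $G\nearrow\Gamma$ (Remark~\ref{re:gin}) then shows that $G_2$ is the same function for $H_\Delta$ and for $\tilde H$.

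Next I would check the hypotheses of Theorem~\ref{th:main theorem 1}. Both coefficients are nonnegative for $0\le\Delta\le1$, so we may take $q=2$. At least one of them is strictly positive: $\tilde J_2>0$ when $\Delta<1$, and $\tilde J_1 = 1>0$ when $\Delta = 1$. This gives the required condition that some $\tilde J_k>0$.

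The remaining point is what "the associated dual height function model" of $H_\Delta$ means. I would define it as the height function model that Section~\ref{sec:current exp} attaches to the representation $\tilde H$ with the coefficients above. This is the only point needing care, and it is really a matter of convention. The enhanced random current expansion is built from the expansion of $\exp(\beta \tilde J_k \cos^k)$, so it depends on the chosen decomposition into powers of cosine and not just on the Gibbs measure. Since the Gibbs measures agree, nothing about $G_2$ depends on this choice.

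With this identification, Theorem~\ref{th:main theorem 1} applies directly. Delocalisation of the dual height model at $\beta$, in the sense of \eqref{eq:deloc var}, rules out exponential decay of $G_2$ for $\tilde H$, and hence for $H_\Delta$ at the same $\beta$. The same theorem also provides $\beta_0(\Delta)<\infty$ such that the height model is delocalised for all $\beta\ge\beta_0$. This completes the corollary.
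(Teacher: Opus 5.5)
Your proposal is correct and matches the paper's argument. The paper rewrites $H_\Delta$ as $\tilde H_\Delta=-\sum_{u\sim v}\bigl[\Delta\cos(\theta_u-\theta_v)+2(1-\Delta)\cos^2(\theta_u-\theta_v)\bigr]\in\tilde{\mathcal H}$ via $\cos(2\theta)=2\cos^2(\theta)-1$, divides out the constant, and applies Theorem~\ref{th:main theorem 1}. Your caveat about conventions is harmless but unnecessary: summing the currents over a fixed net flow shows that the law of the height function depends only on the Fourier coefficients of the single-edge Boltzmann weight, and hence only on the Gibbs measure, not on the chosen decomposition.
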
	
	
	\begin{remark} 
	\label{re:ph tr}
	Let $\Gamma = \mathbb{Z}^2$. 
	Corollary \ref{cor:corollary} gives the existence of a low-temperature regime of no exponential decay of $G_2$ for all $0 \leq \Delta \leq 1$. 
	\par 
	However, for $0 < \Delta < 1$, one can directly compare to the standard XY model via Ginibre inequalities to strengthen the result.
	Together with Remark \ref{re:stand XY} it is straightforward to deduce the existence of a low-temperature BKT regime for $G_i$,
	$i = 1,2$, where 
	\[ 
	G_i(v) \geq \frac{1}{8|v|}, 
	\] 
	for $v \neq 0$.
	\par 
	Existence of a high-temperature exponential decay regime for $G_1$, $G_2$ and for $0 \leq \Delta \leq 1 $ is standard as before.
	\par 
	For $\Delta = 0$, $G_1(v) = 0$ for all $v \neq o$ by symmetry and $G_2$ is the ferromagnetic order parameter in the standard XY model up to a change of variables as noted above.
	\par 
	All of the above is in accordance with the predicted phase diagram.
	\end{remark} 
	
	We stress that the novelty of the first contribution lies in establishing that delocalisation of the dual height function model implies non-exponential decay in the primal spin model.
	Proving the existence of a phase transition in the primal spin model can be achieved by elementary means as pointed out in Remark \ref{re:ph tr}.
	\par 
 	As a somewhat independent complementary result we establish a nematic regime in a region consistent with the expected phase diagram of Figure \ref{fig:schematic fig} for the specific Hamiltonians $H_\Delta$, $0 \leq \Delta \leq 1$.
	The existence of such a nematic regime is already contained in Pfister's correlation comparison argument, see Remark 5.5(2) of \cite{pfister:1982}, after setting 
	$\mu_1 = \Delta, \mu_2 = 1 - \Delta$ and $\mu_4 = 0$ in his rotator model specified in (5.24) of \cite{pfister:1982}.
	We nevertheless record and prove the corresponding result for Hamiltonians
	$H_\Delta$, $0 \leq \Delta \leq 1$,
	for completeness. 
	\par
	Let 
	$\beta^{XY}_c \in (0, \infty)$ be the critical inverse temperature of the standard XY model on $\mathbb{Z}^2$ and let $o$ be the origin vertex of $\mathbb{Z}^2$.
	
	\begin{theorem} 
	\label{th:main theorem 2}
	Consider the generalized XY model on $\mathbb{Z}^2$ with Hamiltonian $H_\Delta$, 
	$0 \leq \Delta \leq 1$.
	For $\beta > \beta^{XY}_c$ there exists 
	$\Delta_0(\beta) \in (0,1)$ such that for all 
	$0 \leq \Delta \leq \Delta_0(\beta)$,
	$G_1$ decays exponentially while 
	\[ 
	G_2(v) \geq \frac{1}{8 |v|},
	\]
	for $v \neq o$.
	\end{theorem}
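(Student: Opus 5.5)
We need two statements. The first is that $G_2$ has a power-law lower bound for small $\Delta$ when $\beta>\beta_c^{XY}$; the second is that $G_1$ decays exponentially.

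\emph{The $G_2$ lower bound.} We compare with the nematic line via Ginibre inequalities. Write
\[
\beta H_\Delta = -\sum_{u\sim v}\bigl[\beta\Delta\cos(\theta_u-\theta_v)+\beta(1-\Delta)\cos(2(\theta_u-\theta_v))\bigr].
\]
All couplings are nonnegative, so by Ginibre $\langle\cos(2(\theta_o-\theta_v))\rangle$ is nondecreasing in each coupling. Dropping the ferromagnetic term ($\beta\Delta\to0$) can only decrease $G_2$. Hence
\[
G_2(v)\ge \langle \cos(2(\theta_o-\theta_v))\rangle^{H_0}_{\beta(1-\Delta)} = \langle\cos(\theta_o-\theta_v)\rangle^{XY}_{\beta(1-\Delta)},
\]
the last equality by the change of variables $\theta\mapsto2\theta$. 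Choose $\Delta_1(\beta)>0$ with $\beta(1-\Delta_1)>\beta_c^{XY}$. For $\Delta\le\Delta_1$, monotonicity in $\beta$ places the standard XY model in its BKT phase, and Remark~\ref{re:stand XY} gives $G_2(v)\ge 1/(8|v|)$ on $\mathbb{Z}^2$.

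\emph{Exponential decay of $G_1$.} This is the harder half, in the spirit of Pfister's argument. The idea is that for small $\Delta$ the ferromagnetic term is a weak perturbation of a model with a $\mathbb{Z}_2$ spin-flip symmetry, under which $G_1\equiv0$. The plan is as follows.

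\begin{enumerate}
\item Write $\theta_v=\psi_v+\pi\sigma_v$ with $\psi_v\in[0,\pi)$ and $\sigma_v\in\{0,1\}$. The nematic term depends only on $\psi$. Conditionally on $\psi$, the $\sigma$'s form an Ising model with couplings
\[
K_{uv}=\beta\Delta\,|\cos(\psi_u-\psi_v)|\le\beta\Delta
\]
and signs $\epsilon_{uv}=\operatorname{sgn}\cos(\psi_u-\psi_v)$, i.e.\ a random-bond Ising model.
\item Compute $\cos(\theta_o-\theta_v)=(-1)^{\sigma_o+\sigma_v}\cos(\psi_o-\psi_v)$. Its conditional expectation is therefore bounded in absolute value by $|\langle(-1)^{\sigma_o+\sigma_v}\rangle_{K,\epsilon}|$.
\item Gauge away the signs $\epsilon$, which is allowed because we take absolute values. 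Then use Griffiths/Ginibre monotonicity for the ferromagnetic Ising model with couplings $K_{uv}\le\beta\Delta$, which bounds the two-point function by the Ising two-point function at coupling $\beta\Delta$.
\item Choose $\Delta_2(\beta)$ such that $\beta\Delta_2<\beta_c^{\mathrm{Ising}}=\tfrac12\log(1+\sqrt2)$. Exponential decay of the subcritical Ising model (or simply the high-temperature expansion, if one prefers to avoid Aizenman--Barsky--Fernández) then gives $|G_1(v)|\le Ce^{-c|v|}$ uniformly in $\psi$ and in $G$. This bound passes to the limit $G\nearrow\mathbb{Z}^2$.
\end{enumerate}

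Set $\Delta_0=\min(\Delta_1,\Delta_2)$.

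\emph{Main obstacle.} The delicate point is step (iii). Gauging removes the signs $\epsilon_{uv}$ only when the random-bond model is unfrustrated. On a frustrated plaquette configuration the gauge trick fails, so a direct Griffiths comparison is not available.

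To handle this I would use the high-temperature (random-path) expansion. Write
\[
\langle(-1)^{\sigma_o+\sigma_v}\rangle=\frac{\sum_{\partial A=\{o,v\}}\prod_{e\in A}\epsilon_e\tanh K_e}{\sum_{\partial A=\emptyset}\prod_{e\in A}\epsilon_e\tanh K_e}.
\]
The denominator could be small due to sign cancellations. To avoid this, I would bound the correlation by the path sum
\[
\sum_{\gamma:o\to v}\prod_{e\in\gamma}\tanh(\beta\Delta).
\]
This path bound is valid for arbitrary signs, because the Simon--Lieb type path expansion holds for Ising models with signed couplings. The path sum is at most $\sum_n 4\cdot3^{n-1}\tanh(\beta\Delta)^n$, which decays exponentially whenever $3\tanh(\beta\Delta)<1$.

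Equivalently, one can use the bound $|\langle\sigma_o\sigma_v\rangle_{K,\epsilon}|\le\langle\sigma_o\sigma_v\rangle_{K,+}$, which holds for frustrated couplings by a standard coupling/duplication argument. Then choose $\Delta_2$ with $3\tanh(\beta\Delta_2)<1$.
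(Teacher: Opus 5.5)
Your $G_2$ half coincides with the paper's argument: Ginibre monotonicity discards the ferromagnetic term, the change of variables $\theta\mapsto 2\theta$ reduces to the standard XY model, and Remark~\ref{re:stand XY} gives the $1/(8|v|)$ bound at inverse temperature $\beta(1-\Delta)>\beta_c^{XY}$. Your $G_1$ half is correct but takes a genuinely different route. The paper adds a single-site term $-\lambda\sum_u\cos(4\theta_u)$ and sends $\lambda\to\infty$ via Ginibre. This dominates $G_1$ by the Ashkin--Teller two-point function with $J=\Delta/2$, $U=1-\Delta$ (Theorem~\ref{th:comp AT}). The paper then imports $\beta_c^{(1)}\to\infty$ as $\Delta\to0$ from the Aoun--Dober--Glazman phase diagram.

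You instead write $\theta_v=\psi_v+\pi\sigma_v$. Conditionally on $\psi$, the $\pm1$ spins $\tau_v=(-1)^{\sigma_v}$ form an Ising model with couplings $J_{uv}=\beta\Delta\cos(\psi_u-\psi_v)$ of modulus at most $\beta\Delta$. You then bound the two-point function uniformly in $\psi$ and in the volume. The step that carries the argument is the one in your last paragraph: $|\langle\tau_o\tau_v\rangle_J|\le\langle\tau_o\tau_v\rangle_{|J|}$. This is exactly Ginibre's inequality for $\mathbb{Z}_2$ on a duplicated system. Take replicas $\tau,\tau'$ with couplings $|J|$ and $J$; substituting $\tau'_i=\tau_i s_i$ turns every factor $\tau_B\pm\tau'_B$ into $\tau_B(1\pm s_B)$. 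After this, Griffiths' second inequality and the ferromagnetic self-avoiding-path bound give exponential decay once $3\tanh(\beta\Delta)<1$.

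Two repairs to the write-up:
\begin{itemize}
\item Drop the sentence attributing the signed path bound to a Simon--Lieb expansion. Simon--Lieb rests on the GKS inequalities and is not available for signed couplings. The path bound with arbitrary signs should instead be justified through the Ginibre comparison.
\item The gauging step (iii) is not needed and can be removed.
\end{itemize}

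What your route buys is an elementary, self-contained argument with an explicit admissible threshold: any $\Delta_0$ with $\beta(1-\Delta_0)>\beta_c^{XY}$ and $3\tanh(\beta\Delta_0)<1$ works. Its $G_1$ half holds on any bounded-degree graph and needs no Ashkin--Teller input. In fact the same decomposition, conditioning on $st$, shows directly that the AT two-point function is dominated by an Ising model at coupling $2\beta J=\beta\Delta$, which yields the paper's \eqref{eq:crit temp} elementarily. The paper's route buys a single comparison theorem that also dominates $G_2$ by the AT four-point function. It thereby turns the heuristic analogy between the two phase diagrams into an inequality.
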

	
	The proof uses an upper bound (Theorem \ref{th:comp AT}) on the correlation function $G_1$ by a two-point correlation of the \textit{Ashkin--Teller model} on $\mathbb{Z}^2$, whose phase diagram is known rigorously \cite{aoun:2024}, to deduce exponential decay of $G_1$ as well as a power-law lower bound on $G_2$ obtained by comparison with the standard XY model using Ginibre inequalities in the stated regime.
	Note that in Theorem \ref{th:main theorem 2} we have the stronger power-law lower bound on $G_2$, not just absence of exponential decay. 
	This amounts to a quantitative improvement of Pfister's result in \cite{pfister:1982}.
	\subsection*{Organization of the paper}
	Section \ref{sec:definition} gives the basic definitions and recalls the Ginibre inequalities and the Ashkin--Teller model. 
	Section \ref{sec:random current} introduces the random-current expansion and the dual height model. 
	Section \ref{sec:deloc of height} proves delocalisation of the height model at finite inverse temperature. 
	Section \ref{sec:loop repr} develops a loop representation, derives the implication from height delocalisation to absence of exponential decay and records a new correlation inequality (Theorem \ref{th:corr ineq}).
	Section \ref{sec:main results} proves the main theorems.

\section*{Acknowledgements}
	The author is grateful to Marcin Lis for valuable discussions, insightful suggestions, and careful supervision during the course of this work. 
	The author also thanks Fabio Toninelli, Kieran Ryan, Lorca Heeney and Diederik van Engelenburg for helpful comments on an earlier version of the manuscript.

	\section{Preliminaries} 
	\label{sec:definition}
	Let $G = (V,E)$ be a finite graph.
	The \textit{generalized XY model} is defined by the Gibbs distribution on $[0, 2 \pi)^{V}$,
	\begin{equation} 
	\label{eq:gibbs meas}
	d \mu^H_{G, \beta} (\theta) 
	\coloneqq 
	\frac{1}{Z_{G, \beta}} 
	e^{- \beta H } 
	\prod_{v \in V} d \theta_v.
	\end{equation}
	The $d \theta_v$, $v \in V$, are independent and uniform on $[0,2 \pi)$ \textit{a priori distributions} and $\beta \in [0, \infty]$ is the \textit{inverse temperature}. 	
	The Hamiltonian $H$ is taken from a class of Hamiltonians 
	\begin{equation}
	\label{eq:hamil}
	\mathcal{H} \coloneqq 
	\{ 
	- \sum_{u \sim v} 
	\sum_{k = 1}^q J_k \cos(k(\theta_u - \theta_v))
	\vert q \in \mathbb{N}, J_k \geq 0, k = 1 , \cdots, q 
	\},
	\end{equation}
	where the non-negative $J_k, k = 1, \cdots, q$, are called \textit{coupling constants}. 
	We write $u \sim v$ if $uv \in E$.
	One can also make the coupling constants edge dependent.
	As long as they are uniformly bounded, this does not create any conceptual differences.  
	Finally,
	\begin{equation}
	\label{eq:partition func}
	Z^H_{G,\beta} \coloneqq \int e^{- \beta H} \prod_{v \in V} d \theta_v
	\end{equation}
	is the \textit{partition function}.
	We will denote the expectation of an observable with respect to 
	$d \mu^H_{G,\beta}$ by 
	$\langle \cdot \rangle^H_{G, \beta}$ and the unnormalized expectation by 
	$\left( \cdot \right)^H_{G, \beta}
	\coloneqq 
	Z^H_{G, \beta} \langle \cdot \rangle^H_{G, \beta}$.
	\par 
	Clearly $(H_\Delta)_{0 \leq \Delta \leq 1} \subset \mathcal{H}$ and 
	$H_1$ is the Hamiltonian of the \textit{standard} XY model.
	\par 
	A function on $[0, 2 \pi)$ is called \textit{negative definite} if its Fourier coefficients are negative.
	
	\begin{lemma}[Ginibre inequalities \cite{ginibre:1970}]
	\label{le:ginibre}
	For any negative definite Hamiltonian $H$, 
	\begin{align} 
	\tag{First Ginibre inequality}
	\langle \cos (a \cdot \theta) \rangle^H_{G, \beta} \rangle & \geq 0 
	\\
	\tag{Second Ginibre inequality}
	\langle \cos (a \cdot \theta) \cos (b \cdot \theta) \rangle^H_{G, \beta} 
	& \geq 
	\langle \cos (a \cdot \theta) \rangle^H_{G, \beta} 
	\langle \cos (b \cdot \theta) \rangle^H_{G, \beta} .
	\end{align} 
	In this context, $a,b \in \mathbb{Z}^V$ and $a \cdot \theta = \sum_{v \in V} a_v \theta_v$. 
	\par 
	\end{lemma}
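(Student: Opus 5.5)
Since the Ginibre inequalities are classical, I would follow Ginibre's duplication (doubled-variables) argument. The plan is to prove the second inequality in detail and deduce the first as a simpler special case. Negative definiteness here means $-\beta H(\theta) = \sum_{b} c_b \cos(b\cdot\theta)$ with all $c_b \ge 0$; the sum runs over finitely many $b \in \mathbb{Z}^V$, and sine terms are absent because $H$ is even. For $H\in\mathcal{H}$ this is immediate from $c_b = \beta J_k$ for $b = k(\delta_u-\delta_v)$.

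\textbf{First inequality.} Expand $e^{-\beta H} = \sum_{n} \prod_b \frac{c_b^{n_b}}{n_b!}\cos(b\cdot\theta)^{n_b}$ as a power series with nonnegative coefficients. Each product of cosines is a nonnegative combination of terms $\cos(d\cdot\theta)$ via $\cos x\cos y = \tfrac12[\cos(x+y)+\cos(x-y)]$. Multiplying by $\cos(a\cdot\theta)$ keeps the coefficients nonnegative. Integrating against the uniform product measure gives $\int \cos(d\cdot\theta)\prod d\theta_v = (2\pi)^{|V|}\mathbf{1}_{d=0} \ge 0$. Hence $(\cos(a\cdot\theta))^H_{G,\beta}\ge 0$, and dividing by $Z>0$ gives the claim. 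Absolute convergence justifies exchanging the sum and the integral.

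\textbf{Second inequality.} Write
\[
Z^2\bigl(\langle \cos a\theta\cos b\theta\rangle - \langle\cos a\theta\rangle\langle\cos b\theta\rangle\bigr)
\]
as a double integral over two independent copies $\theta,\theta'$ with weight $e^{-\beta H(\theta)-\beta H(\theta')}$. The integrand becomes $\cos(a\theta)\cos(b\theta) - \cos(a\theta)\cos(b\theta')$; symmetrizing in $\theta\leftrightarrow\theta'$ turns it into
\[
\tfrac12\bigl(\cos a\theta-\cos a\theta'\bigr)\bigl(\cos b\theta-\cos b\theta'\bigr).
\]
Next pass to the variables $\psi=\theta+\theta'$ and $\phi=\theta-\theta'$. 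The identity
\[
\cos x-\cos y = -2\sin\tfrac{x+y}{2}\sin\tfrac{x-y}{2}
\]
expresses the product of differences through half-angle sines of $a\cdot\psi/2$, $a\cdot\phi/2$, $b\cdot\psi/2$ and $b\cdot\phi/2$. The weight factorizes as
\[
\cos(c\theta)+\cos(c\theta') = 2\cos\tfrac{c\cdot\psi}{2}\cos\tfrac{c\cdot\phi}{2}.
\]
The key point is the following. After expanding everything into products of trigonometric functions of $\psi/2$ and $\phi/2$, the integrand becomes a product of a function of $\psi$ and the same function of $\phi$, up to nonnegative coefficients. Therefore each term integrates to a square, or equivalently each term is a nonnegative combination of characters with nonnegative integrals.

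\textbf{Main obstacle.} The half-angle variables are the delicate part. The map $(\theta,\theta')\mapsto(\psi,\phi)$ is $2^{|V|}$-to-one onto a sublattice-constrained domain, and the parity constraints coupling $\psi_v$ and $\phi_v$ must be handled carefully. I would follow Ginibre's original treatment, which integrates over $\psi_v,\phi_v\in[0,4\pi)$ independently with a parity-compatible normalization. The check is that the symmetric structure makes each resulting monomial of the form $\int f(\psi)\,d\psi\cdot\int f(\phi)\,d\phi = (\int f)^2\ge0$ with a sign-free prefactor, which gives nonnegativity of the difference.
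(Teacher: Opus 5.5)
The paper gives no proof of this lemma beyond citing Ginibre. Your argument is precisely that classical proof and is correct: direct expansion for the first inequality, and for the second the duplicated-variable identity with $\psi=\theta+\theta'$, $\phi=\theta-\theta'$, which turns every factor into $\pm 2\,g(\psi)g(\phi)$ with exactly two minus signs.

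The step you flag closes by periodicity. The integrand is $2\pi$-periodic in each $\theta_v,\theta'_v$, so in each pair $(\psi_v,\phi_v)$ it is invariant under the lattice generated by $(2\pi,2\pi)$ and $(2\pi,-2\pi)$, which contains $(4\pi\mathbb{Z})^2$ with index $2$. Together with the Jacobian $2$ this gives $\iint(\text{integrand})\,d\theta\,d\theta'=4^{-|V|}\bigl(\int_{[0,4\pi)^V}F(\psi)\,d\psi\bigr)^2\ge 0$. Drop your aside that each term is ``equivalently'' a nonnegative combination of characters: it is false in general (e.g.\ $(\cos a\theta-\cos a\theta')^2$ has a negative cross term) and is not needed.
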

	
	\begin{remark}
	$\mathcal{H}$ is precisely the set of all real negative definite Hamiltonians.
	In particular, for all $H \in \mathcal{H}$ the first and second Ginibre inequality hold.
	\end{remark}
	
	\begin{remark} 
	\label{re:gin}
	$\langle \cos (\theta_o - \theta_v) \rangle^H_{G, \beta}$ and 
	$\langle \cos (2 (\theta_o - \theta_v)) \rangle^H_{G, \beta}$  
	are seen to be non-decreasing for increasing $G$ by differentiating
	with respect to an edge dependent coupling constant $J_{uw}$ 
	and using Lemma \ref{le:ginibre}.
	Therefore the correlation functions $G_1$ and $G_2$ as defined in \eqref{eq:G_1} and \eqref{eq:G_2} are well-defined and independent of the \textit{exhaustion} $G \nearrow \Gamma$.
	\end{remark}
	
	We will now introduce a new class of Hamiltonians similar to the one of the generalized XY model. 
	
	\begin{definition} 
	Define, 
	\begin{equation} 
	\tilde{\mathcal{H}} 
	\coloneqq 
	\{
	-
	\sum_{u \sim v} 
	\sum_{k = 1}^q
	\tilde{J}_k
	\cos^k (\theta_u - \theta_v) 
	\vert q \in \mathbb{N} , 
	\tilde{J}_k \geq 0, k = 1, \cdots , q 
	\}.
	\end{equation} 
	Each $\tilde{H} \in \tilde{\mathcal{H}}$ defines a probability distribution $d \mu^{\tilde{H}}_{G, \beta}$ on $[0, 2\pi)^V$  and partition function $Z^{\tilde{H}}_{G, \beta}$ analogously to \eqref{eq:gibbs meas} and \eqref{eq:partition func}.
	\end{definition}
	
	The new class of Hamiltonians $\tilde{\mathcal{H}}$ 
	is contained in $\mathcal{H}$. 
	To see this, note
	\begin{align*} 
	\cos^k(\theta_u - \theta_v) 
	& = 
	\left( 
	\frac{ e^{i(\theta_u - \theta_v)} + e^{-i(\theta_u - \theta_v)}}{2} \right)^k 
	\\
	&
	= 
	\frac{1}{2^k}
	\sum_{l = 0}^k 
	{k \choose l} 
	e^{i (2l - k) (\theta_u - \theta_v)}
	\\
	& 
	= 
	\frac{1}{2^k}
	\sum_{l = 0}^k 
	{k \choose l}
	\cos ((2l-k) (\theta_u - \theta_v)).
	\end{align*} 
	\par 
	The inclusion is strict as can be seen by expanding $\cos(n \theta)$ as a sum of powers of $\cos(\theta)$. 
	For example, $\cos (3 \theta) = 4 \cos^3 (\theta) - 3 \cos (\theta)$.
	\par
	Crucially, for a given 
	$H_\Delta$, $0 \leq \Delta \leq 1$, we can rewrite the nematic coupling term as
	\[ 
	\cos(2(\theta_u - \theta_v)) 
	= 2 \cos^2 (\theta_u - \theta_v) - 1. 
	\] 
	We can divide out the constant term as it appears in the partition function as well yielding 
	\[ 
	d \mu^{H_\Delta}_{G, \beta} 
	\sim 
	d \mu^{\tilde{H}_\Delta}_{G, \beta}, 
	\]
	where 
	\[ 
	\tilde{H}_{\Delta} = - \sum_{u \sim v} 
	\Delta \cos (\theta_u - \theta_v) 
	+ 
	2 (1 - \Delta) 
	\cos^2(\theta_u - \theta_v) 
	\in \tilde{\mathcal{H}}. 
	\]
	Therefore we can work with 
	$\tilde{H}_{\Delta} \in \tilde{\mathcal{H}}$ instead of $H_\Delta \in \mathcal{H}$. 
	\par 
	
	\begin{remark}
	The reason for introducing $\tilde{\mathcal{H}}$ is that it allows for a convenient random current expansion and loop representation, see Section \ref{sec:current exp} and Section \ref{sec:loop repr}. 
	The restriction ensures well-definedness when \textit{switching paths} (Definition \ref{def:path switching}) in a \textit{loop configuration} (Definition \ref{def:loop conf}) later on, see Remark \ref{re:just tilde H}.
	\end{remark}
	\subsection{Ashkin--Teller model}
	Given a finite graph $G = (V,E)$ the (isotropic) \textit{Ashkin--Teller} (AT) model is defined by the Gibbs distribution supported on pairs of spin configurations 
	$(s, t) \in \{ \pm 1 \}^V \times \{ \pm 1 \}^V$ ,
	\begin{equation} 
	d \mu^{AT}_\beta 
	( s, t ) 
	\coloneqq 
	\frac{1}{Z^{AT}_\beta} 
	e^{ 
	- \beta H(s, t)}, 
	\end{equation}
	where 
	\begin{equation} 
	H(s, t) 
	= 
	- 
	\sum_{u \sim v} 
	J
	\left( 
	s_u s_v + t_u t_v 
	\right)
	+ 
	U s_u s_v t_u t_v.
	\end{equation}
	$J$ and $U$ are the \textit{coupling constants} and are taken to be non-negative.
	$\beta \in [0, \infty]$ is again the \textit{inverse temperature} and the \textit{partition function}
	$Z^{AT}_\beta$ is the unique constant which ensures that 
	$\mu^{AT}_\beta$ is in fact a probability distribution. 
	We will denote the expectation of an observable with respect to 
	$d \mu^{AT}_{G, \beta}$ by 
	$\langle \cdot \rangle^{AT}_{G, \beta}$. 
	\par 	
	Aoun, Dober and Glazman deduced the full phase diagram of the isotropic AT model on $\mathbb{Z}^2$ in \cite{aoun:2024}.
	We record some useful results of this paper and refer the reader to \cite{aoun:2024} for further reading.
	\par
	Similar to before the infinite-volume limit is defined as
	\[ 
	\langle \cdot \rangle^{AT}_{\mathbb{Z}^2, \beta}
	\coloneqq 
	\lim_{G \nearrow \mathbb{Z}^2} 
	\langle \cdot \rangle^{AT}_{G, \beta}
	\] 
	for nice enough observables.
	$o$ is still the origin vertex.
	Well-definedness is ensured by the corresponding Ginibre inequality.
	\par 
	For fixed $J,U \geq 0$ there exist two critical $\beta_c^{(1)}, \beta_c^{(2)} > 0$ and $c, C > 0$ 
	such that 
	\begin{equation} 
	\langle 
	s_o s_v 
	\rangle^{AT}_{\mathbb{Z}^2, \beta}
	\begin{cases} 
	& \leq e^{- c |v|} \ \text{if} \ \beta < \beta_c^{(1)}
	\\ 
	& \geq C \ \text{if} \ \beta > \beta_c^{(1)}
	\end{cases}
	,
	\ \ \ \ \ 
	\langle 
	s_o t_o s_v t_v  
	\rangle^{AT}_{\mathbb{Z}^2, \beta}
	\begin{cases} 
	& \leq e^{- c |v|} \ \text{if} \ \beta < \beta_c^{(2)}
	\\ 
	& \geq C \ \text{if} \ \beta > \beta_c^{(2)}.
	\end{cases}
	\end{equation}
	Moreover,
	\[
	\beta_c^{(1)} \geq \beta_c^{(2)}. 
	\]

	\begin{figure}[t] 
	\begin{tikzpicture}
	  \begin{axis}[
	    xlabel= $\Delta$,
	    ylabel= $\beta$,
	    ylabel style={rotate = 270},
	    axis lines* = left,
	    width=10cm,
	    height=8cm,
	    xmin = 0, 
	    xmax = 1, 
	    ymin = 0, 
	    ymax = 2,
	    x dir=reverse,
	   ytick={0,1,...,2},        
	minor y tick num=4,         % nine minor ticks → spacing 0.01
	ymajorgrids=true,
	  ]
	    % two phase boundaries
	    \addplot[name path=curve1, smooth, thick] coordinates {
	        (1, 0.44)
	        (0.5, 0.33)
	        (0, 3)};
	    \addplot[name path=curve2, smooth, thick] coordinates {
	        (0.5, 0.33)   
	        (0.25, 0.38)  
	        (0, 0.44)};
	     \addplot[name path=top, smooth, forget plot] coordinates {(1,5) (0,5)};
	    \addplot[name path=bottom, smooth, forget plot] coordinates {(1,0) (0,0)};
	    % region labels	
	    \addplot [white!30, opacity =0.7] fill between[of=curve1 and top];
	     \addplot [red!30, opacity = 0.7] fill between[of=curve1 and curve2, soft clip={domain=0.5:0}];
	    \addplot [blue!30, opacity = 0.7] fill between[of=curve1 and bottom, soft clip={domain=1:0.4999}];  
	    \addplot [blue!30, opacity = 0.7] fill between[of=curve2 and bottom, soft clip={domain=0.4999999:0}];

	    \node at (axis cs:0.6, 1) {ferromagnetic};
	    \node at (axis cs:0.25, 0.75) {nematic};
	    \node at (axis cs:0.3, 0.2) {disordered};
	  \end{axis}
	\end{tikzpicture}
	\caption{Schematic $\Delta-\beta$ phase diagram of the isotropic AT model on $\mathbb{Z}^2$ with $J = \Delta$ and $U = 1 - \Delta$} 
	\label{fig:schematic fig AT} 
	\end{figure}

	In Figure \ref{fig:schematic fig AT} 
	we plotted the schematic $\Delta - \beta$ phase diagram of the isotropic AT model on $\mathbb{Z}^2$ with $J = \Delta$ and $U = 1 - \Delta$ as proven in \cite{aoun:2024}.  
	In the disordered phase 
	$\langle s_o s_v \rangle_{\mathbb{Z}^2, \beta}^{AT}$ as well as 	
	$\langle s_o s_v t_o t_v \rangle_{\mathbb{Z}^2, \beta}^{AT}$
	decay exponentially for 
	$|v| \rightarrow \infty$, 
	while in the ferromagnetic phase they are both \textit{ordered}, i.e. bounded from below by a constant. 
	In the nematic phase 
	$\langle s_o s_v \rangle_{\mathbb{Z}^2, \beta}^{AT}$
	decays exponentially while 
	$\langle s_o s_v  t_o t_v \rangle_{\mathbb{Z}^2, \beta}^{AT}$ is ordered for
	$|v| \rightarrow \infty$.
	\par 
	Also note that  
	\begin{equation}
	\label{eq:crit temp}
	\beta_c^{(1)} \rightarrow \infty
	\ 
	\text{for the AT model on $\mathbb{Z}^2$ with 
	$J = \frac{\Delta}{2} ,
	U = (1- \Delta)$ 
	and 
	$\Delta \rightarrow 0$}.
	\end{equation}
	\par
	There is a striking resemblance between the expected phase diagram of the generalized XY model in Figure \ref{fig:schematic fig} and the established phase diagram of the isotropic AT model in Figure \ref{fig:schematic fig AT}.
	This resemblance is to be expected as the term in the Hamiltonian of the AT model proportional to $J$ represents a ferromagnetic coupling and the term proportional to $U$ represents a nematic coupling.
	Indeed, the AT model may be regarded as a discrete analogue of the generalized XY model with Hamiltonian $H_\Delta$.
	\par 
	In fact, $G_1$ (respectively $G_2$) is bounded above by the two-point (respectively four-point function) of the Ashkin--Teller model. 
	This will be useful when showing existence of the nematic phase in the generalized XY model.
	
	\begin{theorem} 
	\label{th:comp AT}
	Consider the generalized XY model with $H_\Delta$ and the Ashkin--Teller model with 
	$J = \frac{\Delta}{2}$ and $U = 1-\Delta$.
	For $u, v \in V$ it holds that 
	\begin{align} 
	\langle \cos (\theta_u - \theta_v) \rangle^{H_\Delta}_{G, \beta} 
	& \leq 
	\langle s_u s_v \rangle^{AT}_{G, \beta},
	\\
	\langle \cos ( 2 (\theta_u - \theta_v)) \rangle^{H_\Delta}_{G, \beta} 
	& \leq 
	\langle s_u s_v t_u t_v 
	\rangle^{AT}_{G, \beta}.
	\end{align}
	\begin{proof} 
	Identify 
	$\theta \in [0, 2 \pi) 
	\leftrightarrow 
	\sigma = (\sigma^{(1)}, \sigma^{(2)}) \in \mathbb{S}$, where $\mathbb{S}$ is the real unit circle, via
	$(\sigma^{(1)}, \sigma^{(2)}) = 
	(\cos(\theta), \sin (\theta))$. 
	\par 
	Define 
	\[ 
	H_\Delta^{\lambda} 
	\coloneqq 
	H_\Delta - \lambda 
	\sum_{u \in V} 
	\cos 
	\left( 
	4 \theta_u
	\right)
	.
	\]
	By Lemma \ref{le:ginibre},
	\begin{align}
	\label{eq:AT ineq 1} 
	\langle \cos (\theta_u - \theta_v) \rangle^{H_\Delta^0}_{G,\beta} 
	& \leq 
	\langle \cos (\theta_u - \theta_v) \rangle^{H_\Delta^\infty}_{G, \beta},
	\\ 
	\label{eq:AT ineq 2}
	\langle 
	\cos
	\left( 
	2  (\theta_u - \theta_v)
	\right) 
	\rangle^{H_\Delta^0}_{G,\beta} 
	& \leq 
	\langle
	\cos \left( 
	2 
	(\theta_u - \theta_v)
	\right) 
	\rangle^{H_\Delta^\infty}_{G, \beta}.
	\end{align} 
	For 
	$\lambda = 0$ we recover 
	$H_\Delta$ and taking 
	$\lambda \rightarrow \infty$ concentrates all mass on the angles 
	$\theta = 0, \frac{\pi}{4}, \frac{\pi}{2}, \frac{ 3 \pi}{4}$ or equivalently on
	$(\sigma^{(1)}, \sigma^{(2)}) = 
	(\pm 1, 0) , (0, \pm 1)$. 
	\par 
	Let 
	$s =
	\sigma^{(1)} + \sigma^{(2)}$ and 
	$t = 
	\sigma^{(1)} - \sigma^{(2)}$. 
	Then $\sigma^{(1)} = \frac{s + t}{2}$ and 
	$\sigma^{(2)} = \frac{s - t}{2}$. 
	This operation amounts to rotating and rescaling the support of the measure to 
	$(s,t) = (\pm 1, \pm 1)$.
	Write 
	\begin{align*} 
	\cos(\theta_u - \theta_v) 
	& = 
	\sigma^{(1)}_u \sigma^{(1)}_v
	+ 
	\sigma^{(2)}_u \sigma^{(2)}_v 
	\\ 
	& = 
	\frac{s_u s_v + t_u t_v}{2}, 
	\ \text{and},
	\\ 
	\cos(2(\theta_u - \theta_v)) 
	& = 
	2 \cos (\theta_u - \theta_v)^2 - 1
	\\
	& =
	2 
	\left( 
	\frac{s_u + t_u}{2} 
	\frac{s_v + t_v}{2} 
	+ 
	\frac{s_u - t_u}{2} 
	\frac{s_v - t_v}{2} 
	\right)^2
	- 1
	\\ 
	& = 
	\frac{1}{2} 
	\left( 
	s_u s_v + t_u t_v 
	\right)^2
	- 1
	\\ 
	& = 
	s_u s_v t_u t_v.
	\end{align*}
	Plugging this into \eqref{eq:AT ineq 1} and \eqref{eq:AT ineq 2} finishes the proof.
	\end{proof}
	\end{theorem}
		
	Note that if $\Delta = 1$ the latter theorem reduces to the previously known comparison between the standard XY model and the Ising model \cite{simon:1980comparison}.
	\section{Random current expansion and height function model}
	\label{sec:random current}
	\subsection{Random current expansion}
	\label{sec:current exp}
	In what follows we derive a random current
	expansion for 
	$\tilde{H} \in \tilde{\mathcal{H}}$. 
	In order to prevent notational overhead, we will restrict ourselves to the Hamiltonians of the form
	\begin{equation}
	\label{eq:H example}
	\tilde{H} = 
	- 
	\sum_{u \sim v} 
	J_1 
	\cos (\theta_u - \theta_v) 
	+ 
	J_2
	\cos^2 (\theta_u - \theta_v), 
	\ 
	\text{for} \ J_1, J_2 > 0.
	\end{equation}
	An analogous random current expansion holds for any $\tilde{H} \in \tilde{\mathcal{H}}$, see Remark \ref{rem:gen current}.
	\par 
	For a given graph $G=(V,E)$ let the set of directed edges be $\overrightarrow{E} = 
	\{ 
	(uv),(vu): uv \in E \}$.
	A \textit{current}  
	$\bm{n} : \overrightarrow{E} \rightarrow \mathbb{N}_0$ is an assignment from the set of directed edges to $\mathbb{N}_0$. 
	Given a current, we can define the \textit{amplitude}
	\[ 
	|\bm{n}|_{uv} \coloneqq \bm{n}_{(uv)} + \bm{n}_{(vu)}
	\] 
	and 
	the 
	\textit{divergence} 
	\[ 
	\delta \bm{n}_u 
	\coloneqq 
	\sum_{v \sim u} 
	\bm{n}_{(uv)} - \bm{n}_{(vu)}.
	\]
	Upon identifying $\theta \in [0, 2\pi) \leftrightarrow e^{i \sigma} \in \mathbb{S}$, the complex unit circle, in the usual fashion we can write
	\[
	\cos( \theta_u - \theta_v ) 
	=
	\frac{1}{2}
	\left(
	\sigma_u \bar{\sigma}_v
	+ 
	\bar{\sigma}_u \sigma_v 
	\right),
	\] 
	where $\sigma_u, \sigma_v \in \mathbb{S}$.
	Then
	\begin{align*}
	\cos^2(\theta_u - \theta_v) 
	& =
	\frac{1}{4}
	\left[
	(\sigma_u \bar{\sigma}_v)^2 
	+ 1 + 1 
	+ (\bar{\sigma}_u \sigma_v)^2
	\right]
	\end{align*} 
	and 
	\begin{align} 
	Z^{\tilde{H}}_{G, \beta} 
	& = 
	\int e^{- \beta \tilde{H}} \prod_{v \in V} d \sigma_v
	\\ 
	& = 
	\int \prod_{(uv) \in \overrightarrow{E}} 
	e^{\frac{\beta J_1}{2} \sigma_u \bar{\sigma}_v}
	e^{\frac{\beta J_2}{4} (\sigma_u \bar{\sigma}_v)^2} 
	e^{\frac{\beta J_2}{4}}
	\prod_{v \in V} d \sigma_v
	\\
	& =  	
	\sum_{
	\substack{
	\bm{n}, 
	\bm{m_1},
	\bm{m_2} 
	: 
	\overrightarrow{E} \rightarrow \mathbb{N}_0}
	}
	\prod_{(uv) \in \overrightarrow{E}} 
	\frac{( \beta J_1/2 )^{\bm{n}_{(uv)}}}{\bm{n}_{(uv)}!} 
	\frac{( \beta J_2/4 )^{{\bm{m_1}}_{(uv)}}}{{\bm{m_1}}_{(uv)}!} 
	\frac{( \beta J_2/4 )^{{\bm{m_2}}_{(uv)}}}{{\bm{m_2}}_{(uv)}!} 
	\prod_{v \in V} 
	\sigma_v^{\delta 
	\left( \bm{n} + 2 \bm{m_1} + 0 \bm{m_2}
	\right)_v
	} 
	d \sigma_v
	\label{eq:mit div1}
	\\ 
	& = 
	\sum_{ 
	\delta 
	\left( 
	\bm{n} + 2 \bm{m_1} + 0 \bm{m_2}
	\right) \equiv 0}
	\omega
	(\bm{n}, \bm{m_1}, \bm{m_2}) 
	\label{eq:mit div2}
	, 
	\end{align}
	where 
	\begin{equation} 
	\label{eq:weight current}
	w
	( \bm{n}, \bm{m_1}, \bm{m_2} )
	\coloneqq
	\prod_{uv \in E} 
	\frac{( \beta J_1/2 )^{|\bm{n}|_{uv}}}{\bm{n}_{(uv)}! \bm{n}_{(vu)}!} 
	\frac{( \beta J_2/4 )^{{|\bm{m_1}|}_{uv}}}{{\bm{m_1}}_{(uv)}!{\bm{m_1}}_{(vu)}!} 
	\frac{( \beta J_2/4 )^{{|\bm{m_2}|}_{uv}}}{{\bm{m_2}}_{(uv)}!{\bm{m_2}}_{(vu)}!},
	\end{equation}
	is the \textit{weight} of the tuple of currents $( \bm{n}, \bm{m_1}, \bm{m_2} )$.
	Let 
	\[
	\Omega_0(G) \coloneqq 
	\{
	(\bm{n},
	\bm{m_1}, \bm{m_2}) 
	\in 
	\left( 
	\mathbb{N}_0^{\overrightarrow{E}}
	\right)^3
	:
	\delta 
	(\bm{n} + 
	2 \bm{m_1} + 0 \bm{m_2}) \equiv 0
	\}
	\]
	be the set of \textit{sourceless} tuples of currents on $G$.
	
	\begin{definition}
	The weight in \eqref{eq:weight current} naturally 
	defines a probability measure $\mathbb{P}^{\tilde{H}}
	_{G,\beta}$ on $\Omega_0(G)$. 
	\end{definition}	
	
	Observe that in the expansion of the partition function \eqref{eq:mit div2} the contribution of $\bm{n}$, $\bm{m_1}$ and $\bm{m_2}$ in terms of the overall divergence is multiplied by a factor $1$, $2$ and $0$ respectively. 
	However the contribution of $\bm{n}$, $\bm{m_1}$ and $\bm{m_2}$ in the weight \eqref{eq:weight current} is equal for all three. 
	This prompts the definition of \textit{doubled} edges. 
	\par 
	A \textit{doubled edge} between $u$ and $v$ can be thought of as two undirected edges between $u$ and $v$ glued together, such that there is a distinct 'first' and 'second' edge.
	Directing the two edges result in four \textit{directed doubled edges} between $u$ and $v$:
	\begin{itemize}
	\item 
	the directed doubled edge, where both the first and second edge are directed from $u$ to $v$
	\item 
	its reversal
	\item 
	the directed doubled edge, where the first edge is directed from $u$ to $v$ and the second from $v$ to $u$
	\item 
	its reversal. 
	\end{itemize}

	\begin{figure}[t]
	\label{fig:new edges}
	\centering

	\begin{subfigure}[b]{0.3\textwidth}
	\centering

	\begin{tikzpicture}[scale =0.7,
	    every node/.style={circle, draw, inner sep=2pt},
	    edge/.style={line width=1.2pt},
	    midarrow/.style={
	        postaction={decorate},
	        decoration={
	            markings,
	            mark=at position 0.5 with {\arrow[red]{stealth}}
	        }
	    },
	    midarrowrev/.style={
	        postaction={decorate},
	        decoration={
	            markings,
	            mark=at position 0.5 with {\arrow[red]{stealth reversed}}
	        }
	    }
	]
	
	    % vertices
	    \node (u) at (0,0) {$u$};
	    \node (v) at (3,0) {$v$};
	
	    % offset from nodes
	    \def\off{0.18}
	
	    % shifted start/end points
	    \coordinate (uout) at ([xshift=\off cm]u);
	    \coordinate (vin)  at ([xshift=-\off cm]v);

	    % lower edge with red arrow ←
	    % IMPORTANT: no default '->', only decoration
	    \draw[edge, midarrow, blue]
	        (uout) .. controls +(1,0.5) and +(-1,0.5) .. (vin);
	
	\end{tikzpicture}
	\begin{tikzpicture}[scale =0.7,
	    every node/.style={circle, draw, inner sep=2pt},
	    edge/.style={line width=1.2pt},
	    midarrow/.style={
	        postaction={decorate},
	        decoration={
	            markings,
	            mark=at position 0.5 with {\arrow[red]{stealth}}
	        }
	    },
	    midarrowrev/.style={
	        postaction={decorate},
	        decoration={
	            markings,
	            mark=at position 0.5 with {\arrow[red]{stealth reversed}}
	        }
	    }
	]
	
	    % vertices
	    \node (u) at (0,0) {$u$};
	    \node (v) at (3,0) {$v$};
	
	    % offset from nodes
	    \def\off{0.18}
	
	    % shifted start/end points
	    \coordinate (uout) at ([xshift=\off cm]u);
	    \coordinate (vin)  at ([xshift=-\off cm]v);

	    % lower edge with red arrow ←
	    % IMPORTANT: no default '->', only decoration
	    \draw[edge, midarrowrev, blue]
	        (uout) .. controls +(1,0.5) and +(-1,0.5) .. (vin);
	
	\end{tikzpicture}
	
	\caption{Directed single edges. 
	\\ 
	Multiplicities are given 
	\\ 
	by $\bm{n}$ in our notation.}
	 
	\end{subfigure}
	\begin{subfigure}[b]{0.3\textwidth}
	\centering
	
	\begin{tikzpicture}[scale = 0.7,
	    every node/.style={circle, draw, inner sep=2pt},
	    edge/.style={line width=1.2pt},
	    midarrow/.style={
	        postaction={decorate},
	        decoration={
	            markings,
	            mark=at position 0.5 with {\arrow[red]{stealth}}
	        }
	    },
	    midarrowrev/.style={
	        postaction={decorate},
	        decoration={
	            markings,
	            mark=at position 0.5 with {\arrow[red]{stealth reversed}}
	        }
	    }
	]
	
	    % vertices
	    \node (u) at (0,0) {$u$};
	    \node (v) at (3,0) {$v$};
	
	    % offset from nodes
	    \def\off{0.18}
	
	    % shifted start/end points
	    \coordinate (uout) at ([xshift=\off cm]u);
	    \coordinate (vin)  at ([xshift=-\off cm]v);
	
	    % upper edge with red arrow →
	    \draw[edge, midarrow, blue]
	        (uout) .. controls +(1,1.2) and +(-1,1.2) .. (vin);
	
	    % lower edge with red arrow ←
	    % IMPORTANT: no default '->', only decoration
	    \draw[edge, midarrow, blue]
	        (uout) .. controls +(1,0.5) and +(-1,0.5) .. (vin);
	
	\end{tikzpicture}

	\begin{tikzpicture}[scale =0.7,
	    every node/.style={circle, draw, inner sep=2pt},
	    edge/.style={line width=1.2pt},
	    midarrow/.style={
	        postaction={decorate},
	        decoration={
	            markings,
	            mark=at position 0.5 with {\arrow[red]{stealth}}
	        }
	    },
	    midarrowrev/.style={
	        postaction={decorate},
	        decoration={
	            markings,
	            mark=at position 0.5 with {\arrow[red]{stealth reversed}}
	        }
	    }
	]
	
	    % vertices
	    \node (u) at (0,0) {$u$};
	    \node (v) at (3,0) {$v$};
	
	    % offset from nodes
	    \def\off{0.18}
	
	    % shifted start/end points
	    \coordinate (uout) at ([xshift=\off cm]u);
	    \coordinate (vin)  at ([xshift=-\off cm]v);
	
	    % upper edge with red arrow →
	    \draw[edge, midarrowrev, blue]
	        (uout) .. controls +(1,1.2) and +(-1,1.2) .. (vin);
	
	    % lower edge with red arrow ←
	    % IMPORTANT: no default '->', only decoration
	    \draw[edge, midarrowrev, blue]
	        (uout) .. controls +(1,0.5) and +(-1,0.5) .. (vin);
	
	\end{tikzpicture}
	
	\caption{Directed doubled edges. 
	\\ 
	Multiplicities are given 
	\\ 
	by $\bm{m_1}$ in our notation.}

	\end{subfigure}%
	\begin{subfigure}[b]{0.3\textwidth}
	\centering
	\begin{tikzpicture}[scale =0.7,
	    every node/.style={circle, draw, inner sep=2pt},
	    edge/.style={line width=1.2pt},
	    midarrow/.style={
	        postaction={decorate},
	        decoration={
	            markings,
	            mark=at position 0.5 with {\arrow[red]{stealth}}
	        }
	    },
	    midarrowrev/.style={
	        postaction={decorate},
	        decoration={
	            markings,
	            mark=at position 0.5 with {\arrow[red]{stealth reversed}}
	        }
	    }
	]
	
	    % vertices
	    \node (u) at (0,0) {$u$};
	    \node (v) at (3,0) {$v$};
	
	    % offset from nodes
	    \def\off{0.18}
	
	    % shifted start/end points
	    \coordinate (uout) at ([xshift=\off cm]u);
	    \coordinate (vin)  at ([xshift=-\off cm]v);
	
	    % upper edge with red arrow →
	    \draw[edge, midarrow, blue]
	        (uout) .. controls +(1,1.2) and +(-1,1.2) .. (vin);
	
	    % lower edge with red arrow ←
	    % IMPORTANT: no default '->', only decoration
	    \draw[edge, midarrowrev, blue]
	        (uout) .. controls +(1,0.5) and +(-1,0.5) .. (vin);
	
	\end{tikzpicture}
	
	\begin{tikzpicture}[scale =0.7,
	    every node/.style={circle, draw, inner sep=2pt},
	    edge/.style={line width=1.2pt},
	    midarrow/.style={
	        postaction={decorate},
	        decoration={
	            markings,
	            mark=at position 0.5 with {\arrow[red]{stealth}}
	        }
	    },
	    midarrowrev/.style={
	        postaction={decorate},
	        decoration={
	            markings,
	            mark=at position 0.5 with {\arrow[red]{stealth reversed}}
	        }
	    }
	]
	
	    % vertices
	    \node (u) at (0,0) {$u$};
	    \node (v) at (3,0) {$v$};
	
	    % offset from nodes
	    \def\off{0.18}
	
	    % shifted start/end points
	    \coordinate (uout) at ([xshift=\off cm]u);
	    \coordinate (vin)  at ([xshift=-\off cm]v);
	
	    % upper edge with red arrow →
	    \draw[edge, midarrowrev, blue]
	        (uout) .. controls +(1,1.2) and +(-1,1.2) .. (vin);
	
	    % lower edge with red arrow ←
	    % IMPORTANT: no default '->', only decoration
	    \draw[edge, midarrow, blue]
	        (uout) .. controls +(1,0.5) and +(-1,0.5) .. (vin);
	
	\end{tikzpicture}
	
	\caption{Directed doubled edges. 
	\\
	Multiplicities are given 
	\\ 
	by $\bm{m_2}$ in our notation.}
	\end{subfigure}
	\hfill 
	
	\caption{Directed edge types between $u$ and $v$ for Hamiltonians of type \eqref{eq:H example}.}
	\label{fig:edge types}
	\end{figure}

	Doubled edges give rise to a new type of multigraph:
	
	\begin{definition}
	An \textit{undirected multigraph on $G$ with doubled 
	edges $\mathcal{M}$} is a graph on $G$ with possibly 
	multiple undirected single and doubled edges between 
	$u$ and $v$ if $uv \in E$. 
	\end{definition}
	
	\begin{definition}
	A \textit{directed multigraph on $G$ with doubled 
	edges $\overrightarrow{\mathcal{M}}$} is a graph on 
	$G$ with possibly multiple directed single and 
	directed doubled edges between $u$ and $v$ if $uv \in 
	E$. 
	\end{definition}
	
	A directed multigraph with doubled edges \textit{on} 
	$\mathcal{M}$ is obtained from an undirected 
	multigraph with doubled edges $\mathcal{M}$ by 
	assigning orientations to the single and doubled 
	edges. 
	\par 	
	A tuple of currents $\bm{N} = 
	(\bm{n}, \bm{m_1}, \bm{m_2})$ \textit{induce} an undirected 
	multigraph with doubled edges 
	$\mathcal{M}^{\bm{N}}$ by 
	setting
	\begin{equation}
	\label{not:single}
	s(\mathcal{M}^{\bm{N}})_{uv} 
	\coloneqq |\bm{n}|_{uv},
	\end{equation}
	where $s(\mathcal{M}^{\bm{N}})_{uv}$ 
	is the number of undirected single edges between $u$ 
	and $v$ and
	\begin{equation}
	\label{not:doubled}
	d(\mathcal{M}^{\bm{N}})_{uv} 
	\coloneqq 
	|\bm{m_1}|_{uv}
	+
	|\bm{m_2}|_{uv}
	\end{equation}
	where $d(\mathcal{M}^{\bm{N}})_{uv}$ 
	is the number of undirected doubled edges between $u$ 
	and $v$.
	\par 
	We say that the directed multigraph with doubled 
	edges 
	$\overrightarrow{\mathcal{M}}^{\bm{N}}$ on 
	$\mathcal{M}^{\bm{N}}$ is \textit{induced} by 
	$
	\bm{N} = \left( 
	\bm{n}, \bm{m_1}, \bm{m_2} \right)$ 
	if for every $(uv) \in 
	\overrightarrow{E}$ there are exactly 
	\begin{itemize}
	\item $\bm{n}_{(uv)}$ directed single edges from 
	$u$ to $v$,
	\item ${\bm{m_1}}_{(uv)}$ directed doubled 
	edges, where the 'first' and the 'second' edge face 
	from $u$ to $v$
	\item ${\bm{m_2}}_{(uv)}$ directed doubled 
	edges, where the 'first' edge faces from $u$ to $v$ 
	and the 'second' edge faces from $v$ to $u$.
	\end{itemize}
	\par 
	Note that there are multiple 
	$\overrightarrow{\mathcal{M}}^{\bm{N}}$ on 
	$\mathcal{M}^{\bm{N}}$ and that they are equal up to reordering the directed single and directed doubled edges between two neighbouring vertices. 
	The precise number will be important for the loop representation of Section \ref{sec:loop repr}. 
	\par 
	For now one can think a tuple of currents
	$\bm{N}= ( \bm{n}, \bm{m_1}, \bm{m_2}) \in \Omega_0(G)$ as a directed multigraph with doubled edges 
	$\overrightarrow{\mathcal{M}}^{\bm{N}}$ 
	on 
	$\mathcal{M}^{\bm{N}}$.
	
	\begin{remark} 
	\label{rem:gen current}
	One can treat general Hamiltonians $\tilde{H} \in \tilde{\mathcal{H}}$ with at least one $J_k > 0$ as follows: 
	\par 
	For each $J_k > 0$ in $\tilde{H}$ define the space 
	$\{ \pm 1 \}^k$. 
	For a sequence $s_k \in \{\pm 1 \}^k$ define the equivalence relation 
	\[ 
	s_k \sim \tilde{s}_k 
	: \Leftrightarrow 
	s_k = - \tilde{s}_k, 
	\]
	where the "$-$" is understood pointwise.
	The equivalence classes of $\{ \pm 1 \}^k / \sim$ are denoted by $[ s_k ]$ and 
	\[ 
	[S]_{\tilde{H}} \coloneqq 
	\bigcup_{k:J_k > 0 \
	\text{in} 
	\ 
	\tilde{H}}
	\{\pm 1\}^k 
	/ \sim . 
	\]	 
	Define the \textit{absolute divergence} of $[ s_k ]$ by $\divv 
	[ s_k ] \coloneqq | \sum_{i = 1}^k (s_k)_i |$. 
	Well-definedness is checked immediately.
	\par
	Then we can expand each coupling term of the Hamiltonian,
	\begin{align*}
	J_k \cos^k ( \theta_u - \theta_v ) 
	& = 
	\frac{J_k}{2^k} 
	\sum_{s_k \in \{ \pm 1\}^k }
	\prod_{i = 1}^k 
	(\sigma_u \bar{\sigma}_v)^{(s_k)_i}
	\\
	& 
	= 
	\frac{J_k}{2^k} 
	\sum_{[s_k] \in \{ \pm 1\}^k / \sim }
	\left(
	\prod_{i = 1}^k 
	(\sigma_u \bar{\sigma}_v)^{(s_k)_i}
	+ 
	\prod_{i = 1}^k 
	(\bar{\sigma}_u \sigma_v)^{(s_k)_i}
	\right).
	\end{align*}
	Thus the partition function expands into a sum of weights 
	\[ 
	Z^{\tilde{H}}_{G, \beta} 
	= 
	\sum_{ 
	\delta 
	\left( 
	\sum_{[s] \in [S]_{\tilde{H}}}
	\divv [s] 
	\bm{n}_{[s]}	
	\right)
	\equiv 0}
	w 
	\left( 
	\left( 
	\bm{n}_{[s]}
	\right)_{[s] \in [S]_{\tilde{H}}}
	\right), 
	\]
	where 
	\[ 
	w 
	\left(
	\left( 
	\bm{n}_{[s]}
	\right)_{[s] \in [S]_{\tilde{H}}}
	\right)
	= 
	\prod_{[s] \in [S]_{\tilde{H}}} 
	\frac{
	(\beta J_{\ell([s])}/ 2^{\ell([s])})
	^{| \bm{n_{[s]}}  |_{uv}}}
	{	
	{\bm{n}_{[s]}}_{(uv)}! 
	{\bm{n}_{[s]}}_{(vu)}!
	} .
	\]
	$\ell ([s])$ is the length of the sequence of any representative of $[s]$. 
	\par 
	Graphically: 
	Each $J_k > 0$ gives rise to \textit{$k$-edges}.
	A \textit{$k$-edge} consists of $k$ labelled edges, say from $1$ to $k$, glued together. 
	Orienting the labelled edges creates $2^k$ \textit{oriented $k$-edges}.  
	Note that there are $2^{k-1}$ many directed $k$-edges up to reversal. 
	\\ 
	Directed and undirected multigraphs whose edges types are the $k$-edges for those $k$ with $J_k > 0$ are defined analogously to before.
	We stress that such multigraphs contain edges of different arities, with $k$-edges present only if $J_k > 0$.
	\\
	Instead of a tuple of currents $([s])_{[s] \in [S]_{\tilde{H}}}$ one may equivalently consider a directed multigraph whose edges types are the $k$-edges for those $k$ with $J_k > 0$, 
	$\overrightarrow{\mathcal{M}}^{[S]_{\tilde{H}}}$
	on $\mathcal{M}^{[S]_{\tilde{H}}}$. 
	For each $k$ with $J_k > 0$ in $\tilde{H}$,
	    there are $2^{k-1}$ currents 
	    $\bm{n}_{[s_k]} \in [S]_{\tilde{H}}$ pertaining to $k$ and each
	$\bm{n}_{[s_k]}$ describes the number of a particular directed $k$-edge and its reversal.
	To be exact, 
	${\bm{n}_{[s_k]}}_{(uv)}$ describes the number of those directed $k$-edges that have exactly the labelled edges 
	$i \in
	\{ j : (s_k)_j = + 1 
	\}
	$, 
	oriented from $u$ to $v$.
	This is well-defined up to fixing a representative of $[s_k]$.
	\end{remark}
	
	\subsection{Height model}
	Let $G^\dagger = (V^\dagger, E^\dagger)$ be the 
	dual graph of $G$. 
	Each tuple in $\Omega_0(G)$ naturally defines a height 
	function $h$ on the set of faces of $G$. 
	The difference between two neighbouring faces $u$ and 
	$u'$ is defined by 
	\begin{align*}
	h(u) - h(u')& = 
	(\bm{n}_{(vv')} - \bm{n}_{(v'v)})
	+ 
	2 ( {\bm{m_1}}_{(vv')} - {\bm{m_1}}_{(v'v)})
	+ 
	0 ( {\bm{m_2}}_{(vv')} - {\bm{m_2}}_{(v'v)}) 
	\\
	&= 
	(\bm{n}_{(vv')} - \bm{n}_{(v'v)})
	+ 
	2 ( {\bm{m_1}}_{(vv')} - {\bm{m_1}}_{(v'v)}),
	\end{align*}
	where the primal directed edge $(vv')$ crosses the 
	dual directed edge $(uu')$ from right to left.
	Well-definedness follows from the fact that 
	$\delta (\bm{n} + 
	2 \bm{m_1} + 
	0 \bm{m_2}) \equiv 0$.
	\par  
	Graphically one can think of the height difference between two neighbouring faces as the difference between the number of edges oriented from $v$ to $v'$ and the number of edges oriented from $v'$ to $v$ in 
	$\overrightarrow{\mathcal{M}}^{\bm{N}}$. 
	Here the two edges in the doubled edges count as two separate edges. 
	\par 
	The height function is defined up to a constant. 
	Let $\partial$ be the exterior vertex of $G^\dagger$, which in turn corresponds to the outer face of $G$.
	Then there is a one-to-one correspondence between
	\[ 
	\Omega^\dagger_0(G^\dagger) \coloneqq 
	\{ 
	h \in \mathbb{Z}^{V^\dagger}
	\ : \ 
	h(\partial) = 0 
	\}
	\]
	and $\Omega_0(G)$. 
	\par 
	Consequently $\mathbb{P}^{\tilde{H}}_{G,
	\beta}$ can also be seen as a probability measure on 
	the set of height functions $\Omega_0^\dagger$. 
	In terms of the height function, it is a Gibbs 
	measure given by
	\begin{equation}
	\label{eq:height def}
	\mathbb{P}^{\tilde{H}}_{G^\dagger, \beta}
	(h)
	\propto
	\exp \left(
	- \sum_{uu' \in E^\dagger} 
	\mathcal{V}^{\tilde{H}}_\beta ( h(u)- h(u'))
	\right).
	\end{equation}
	The symmetric potential
	$\mathcal{V}^{\tilde{H}}_\beta : 
	\mathbb{Z} \rightarrow \mathbb{R}$ is given by 
	\begin{align*}
	\mathcal{V}^{\tilde{H}}_\beta (k)
	&\propto	
	- \log 
	\left( 
	\sum_{n + 2 m_1 + 0 m_2 = k}
	I_n(\beta J_1)
	I_{m_1} (\beta J_2/2)
	I_{m_2} (\beta J_2/2)
	\right) 
	\\
	&\propto
	- \log 
	\left(
	\sum_{n + 2 m_1 = k}
	I_n(\beta J_1)
	I_{m_1} (\beta J_2/2)
	\right),
	\end{align*}
	where 
	\[ 
	I_k (\beta J) 
	\coloneqq 
	\sum_{i = 0}^{\infty} 
	\frac{ ( \beta J / 2)^{2 i + |k|} }{i! (i+|k|)!}
	, 
	\ k \in \mathbb{Z}, \ \beta J > 0,
	\]
	is the \textit{modified Bessel function}.
	Well-definedness follows from the identity 
	\[ 
	\sum_{k \in \mathbb{Z}} 
	I_k (\beta J) 
	= 
	e^{\beta J} 
	\]
	which in turn follows from plugging in $\theta = 0$ in the Fourier expansion
	\[ 
	e^{\beta J \cos (\theta)} 
	= 
	\sum_{k \in \mathbb{Z}} 
	I_k (\beta J) e^{i k \theta}.
	\]
	
	\begin{remark} 
	For general Hamiltonians 
	$\tilde{H} 
	\in 
	\tilde{\mathcal{H}}$
	one obtains the symmetric potential
	\[ 
	\mathcal{V}^{\tilde{H}}_\beta(l) 
	\propto 
	- 
	\log 
	\left( 
	\sum_{
	\left( 
	\sum_{[s] \in [S]_{\tilde{H}}}
	\divv
	( 
	[s] )
	n_{[s]} 
	\right)
	= 
	l
	}
	\prod_{
	[s] \in [S]_{\tilde{H}}}
	I_{n_{[s]}}
	\left( 
	\beta J_{\ell ([s])}/2^{\ell ([s]) - 1} 
	\right)
	\right) .
	\]
	\end{remark}
	\section{Delocalisation of the height function model}
	\label{sec:deloc of height}
	Recall that $\Gamma$ is a planar and locally finite graph that is invariant under the action of a lattice isomorphic to $\mathbb{Z}^2$.
	Let $\Gamma^\dagger = \left( V^\dagger, E^\dagger \right)$ be its planar dual and let $\mathcal{V}: \mathbb{Z} \rightarrow \mathbb{R}$ be a  \textit{convex on the integers} and symmetric potential.
	It is possible to extend the notion of a Gibbs measure of height functions on the \textit{infinite} lattice 
	$\Gamma^\dagger$ for a potential $\mathcal{V}$.
	We call a measure $\nu$ on height functions 
	$
	\varphi: V^\dagger \rightarrow 
	\mathbb{Z}
	$
	a \textit{Gibbs measure for $\mathcal{V}$} if it satisfies the \textit{Dobrushin--Lanford--Ruelle relation}, that is for all finite $\Lambda \subset V^\dagger$,
	\[ 	
	\nu_\Lambda (\cdot) 
	= 
	\int_{\mathbb{Z}^{V^\dagger}}
	\nu_\Lambda^\varphi
	(\cdot) 
	d \nu(\varphi),
	\]
	where
	$\nu_\Lambda$ is $\nu$ restricted to $\Lambda$ 
	and 
	$\nu_\Lambda^\varphi$ is the Gibbs measure defined as in \eqref{eq:height def} (with $\mathcal{V}$ instead of
	$\mathcal{V}^{\tilde{H}}_\beta$) 
	conditioned to be $\varphi$ on the boundary of $\Lambda$.
	Moreover, $\nu$ is said to be \textit{shift-invariant} if it is invariant under the action of the lattice on $\Gamma$.
	We refer the reader to \cite{lammers:2022height} for a more detailed account. 
	\par
	The model \textit{delocalises} 
	if there exist no shift-invariant Gibbs measures on $\Gamma$. 
	Otherwise the model \textit{localises}. 	
	\par  
	Fix a face $\bm{o}$ and define a sequence 
	$(B_N)_{N \geq 0}$ of finite subgraphs of $\Gamma$ such that $B_0$ contains $\bm{o}$ and $B_N \nearrow \Gamma$. 
	Consider the sequence of dual graphs 
	$(B_N^\dagger)_{N \geq 0}$. 
	For all $N \geq 0$, 
	$\bm{o} \in B^\dagger_N$,  
	$B^\dagger_N$ with its exterior vertex and adjacent edges removed is a subgraph of $\Gamma^\dagger$
	and the sequence of $B^\dagger_N$ with its exterior vertex and adjacent edges removed increases to $\Gamma^\dagger$.  
	\par
	In Theorem 2.7 of \cite{lammers:2024delocalisation} Lammers showed that if $\mathcal{V}$ (in addition to being convex and symmetric) is \textit{supergaussian},
	that is if its second derivative 
	\begin{align*} 
	\mathcal{V}^{(2)} : 
	\mathbb{Z} & \rightarrow \mathbb{R} 
	\\ 
	k & \mapsto 
	\mathcal{V}(k+1) - 2 \mathcal{V}(k) 
	+ \mathcal{V}(k-1)
	\end{align*}
	is non-increasing over the non-negative integers, localisation is equivalent to 
	\begin{equation} 
	\label{eq:loc var}
	\sup_{N \geq 0} 
	\mathbb{E}^\mathcal{V}_{B^\dagger_N, \beta} 
	\left[
	h(\bm{o})^2
	\right]
	 < \infty
	\end{equation}
	and 
	delocalisation is equivalent to 
	\begin{equation} 
	\label{eq:deloc var}
	\sup_{N \geq 0} 
	\mathbb{E}^\mathcal{V}_{B^\dagger_N, \beta} 
	\left[ 
	h(\bm{o})^2
	\right]
	 = \infty.
	\end{equation}
	\par 
	 In Theorem 5.3 of \cite{engelenburg:2025} van Engelenburg and Lis 
	 proved delocalisation of $\mathcal{V}^H$ for $H \in \mathcal{H}$ (and in particular for $\tilde{H} \in \tilde{\mathcal{H}}$) at finite inverse temperature $\beta$ and equivalently unbounded variance of the height function as in \eqref{eq:deloc var}. 
	
	\begin{theorem}[Theorem 5.3 \cite{engelenburg:2025}]
	\label{th:height deloc}
	Let 
	$\Gamma^\dagger$ and 
	$(B_N^\dagger)_{N \geq 0}$ as above and $H \in \mathcal{H}$ with at least one $J_k > 0$. 
	There exists 
	$\beta_0 \in (0, \infty)$ such that for all 
	$\beta \geq \beta_0$ , 
	\begin{equation} 
	\label{eq:deloc}
	\sup_{N \geq 0}
	\mathbb{E}_{B_N^\dagger, \beta}^H
	\left( h(\bm{o})^2 \right) 
	= 
	\infty .
	\end{equation}
	\begin{proof} 
	We only briefly summarize the proof as given in \cite{engelenburg:2025} for completeness and refer the reader to the original paper for further details.
	\par 	
	Take $H \in \mathcal{H}$ and pick the smallest $k \geq 1$ such that $J_k > 0$. 
	Let $H^\ast$ be the Hamiltonian with solely the coupling term corresponding to $k$ remaining, 
	that is 
	\[ 
	H^\ast = - \sum_{u \sim v} J_k \cos 
	\left( k ( \theta_u - \theta_v) \right). 
	\]
	A change of variables reduces the latter Hamiltonian to the standard XY model, where it was shown in \cite{lis:2023} that the potential of the dual height function model is convex, symmetric and supergaussian.
	Hence for the height function model on $\Gamma^\dagger$ with potential $\mathcal{V}^{H^\ast}$ at all positive inverse temperatures $\beta \in (0, \infty)$, the dichotomy in 
	\eqref{eq:loc var} and \eqref{eq:deloc var} holds.
	\par 
	By a generalized Ginibre inequality (Lemma 5.2 \cite{engelenburg:2025}) it follows that 
	$\mathbb{E}^{\mathcal{V}^{H^\ast}}_{\Gamma^\dagger, \beta}
	\left[ 
	h(\bm{o})^2 
	\right]$ is increasing in $\beta$, 
	which enables a lower bound of the type 
	\[
	\mathbb{E}^{{\mathcal{V}}^{H^\ast}}_{\Gamma^\dagger, \beta}
	\left[ 
	h(\bm{o})^2 
	\right]
	\leq 
	\mathbb{E}^{{\mathcal{V}}^{H}}_{\Gamma^\dagger, \beta}
	\left[ 
	h(\bm{o})^2 
	\right].
	\] 
	\par 
	Subsequently, it suffices to prove delocalisation of the height function model on $\Gamma^\dagger$ with potential $\mathcal{V}^{H^\ast}$ at large enough $\beta$. 
	To do this, one can make use of a sufficient criterion given by Lammers in Theorem 2.5 of \cite{lammers:2022height}, which says that if $T$ is a translation invariant graph of degree at most three, and if the potential $\mathcal{V}$ is, in addition to being convex and symmetric, \textit{excited}, 
	that is 
	\[ 
	\mathcal{V}(\pm 1) \leq \mathcal{V}(0) + \log 2,
	\]
	then the corresponding model delocalises.
	\par 
	In order to extend from a graph of degree at most three to the planar and biperiodic lattice $\Gamma$, one has to \textit{split}, \textit{glue} and \textit{reduce}
	(Lemma 5.7 - Lemma 5.9 \cite{engelenburg:2025}) edges.
	\end{proof}
	\end{theorem}
\section{Delocalisation rules out exponential decay} 
	\label{sec:loop repr}
	Let $\tilde{H} \in \tilde{\mathcal{H}}$ henceforth.
	If $\mathcal{V}^{\tilde{H}}_\beta$ delocalises, then
	\[ 
	\mathbb{E}^{\tilde{H}}_{B^\dagger_N, \beta}
	\left( 
	|h (\bm{o})| \right)
	\xrightarrow{N \rightarrow \infty}
	\infty.
	\]
	The next section aims to connect the expectation 
	$\mathbb{E}^{\tilde{H}}_{B^\dagger_N, \beta}
	\left( 
	|h (\bm{o})| 
	\right)$ of the height function model
	to 
	nematic order parameter $G_2$ of the generalized XY model with Hamiltonian $\tilde{H}$. 
	The idea is to decompose the expectation of a product of 
	spins under 
	$d \mu^{\tilde{H}}_{G, \beta}$ into a sum of the weights of loop and path 
	configurations.
	Then the height $h(\bm{o})$ is determined by the net winding of the loops around $\bm{o}$.
	Finally we have to compare the probability that two vertices are connected by a loop to the nematic order parameter $G_2$.
	\subsection{Loop representation}
	The starting point for the loop representation is the random current expansion of Section \ref{sec:current exp}.
	Let $\bm{N}$ induce $\mathcal{M}^{\bm{N}}$ and $\overrightarrow{\mathcal{M}}^{\bm{N}}$ on $\mathcal{M}^{\bm{N}}$.
	Note that there are precisely 
	\[ 
	\prod_{uv \in E} 
	\frac{|\bm{n}|_{uv}!}
	{\bm{n}_{(uv)}! \bm{n}_{(vu)}!}
	\frac{ (|\bm{m_1}| + |\bm{m_2}|)_{uv}!}
	{
	{\bm{m_1}}_{(uv)}!
	{\bm{m_1}}_{(vu)}!
	{\bm{m_2}}_{(uv)}!
	{\bm{m_2}}_{(vu)}!
	}
	\]
	$\overrightarrow{\mathcal{M}}^{\bm{N}}$ on 
	$\mathcal{M}^{\bm{N}}$.
	Recall the weight of $\bm{N}$ in \eqref{eq:weight current} and the number of single and doubled edges of a multigraph with single and doubled edges, $s(\cdot)$ and $d(\cdot)$. 
	With this notation we can rewrite:
	\[ 
	w(\bm{N}) 
	= 
	\sum_{
	\overrightarrow{\mathcal{M}}^{\bm{N}}
	\text{on
	$
	\mathcal{M}^{\bm{N}}
	$
	}
	}
	\prod_{uv \in E} 
	\frac{ 
	(\beta J_1/2)^{s(\mathcal{M}^{\bm{N}})_{uv}}}
	{ s(\mathcal{M}^{\bm{N}} )_{uv}!}
	\frac{ 
	(\beta J_2/4)^{d(\mathcal{M}^{\bm{N}})_{uv}}}
	{ d(\mathcal{M}^{\bm{N}} )_{uv}!}. 
	\]
	One can think of 
	\[ 
	\prod_{uv \in E} 
	\frac{ 
	(\beta J_1/2)^{s(\mathcal{M}^{\bm{N}})_{uv}}}
	{ s(\mathcal{M}^{\bm{N}} )_{uv}!}
	\frac{ 
	(\beta J_2/4)^{d(\mathcal{M}^{\bm{N}})_{uv}}}
	{ d(\mathcal{M}^{\bm{N}} )_{uv}!}
	\]
	as a weight assigned to each 
	$\overrightarrow{\mathcal{M}}^{\bm{N}}$, which is constant on $\mathcal{M}^{\bm{N}}$.
	\par 
	Specify a vertex set $S \subseteq V$. 
	Given a directed multigraph with doubled edges induced by $\bm{N}$, one can repeatedly glue an incoming and outgoing edge together at every vertex $v \in V \setminus S$ irrespective whether that edge is a single edge or belongs to a doubled edge.
	In the end one obtains a collection of directed loops not visiting $S$ or directed loops not visiting $S$ except possibly at their start or end vertex.
	At each vertex $v \in V \setminus S$, $\delta(\bm{n} + 2 \bm{m_2})_v$ outgoing edges remain unpaired if $\delta(\bm{n} + 2 \bm{m_2})_v > 0$ and $- \delta(\bm{n} + 2 \bm{m_2})_v$ incoming edges remain unpaired if $\delta(\bm{n} + 2 \bm{m_2})_v < 0$.
	If $\delta(\bm{n} + 2 \bm{m_2})_v = 0$, then all incoming and outgoing edges are paired. 
	\par 
	This motivates the following definition.
	\begin{definition}[Loop configurations on $	
	\mathcal{M}^{\bm{N}}$ outside 
	$S$]
	\label{def:loop conf}
	Let 
	$\bm{N} = 
	\left( \bm{n}, \bm{m_1}, \bm{m_2} 
	\right)$ 
	be a tuple of 
	currents. 
	Let 
	$\bm{N}$ 
	induce 
	$\mathcal{M}
	^{\bm{N}}$  and 
	$
	\overrightarrow{\mathcal{M}}^{\bm{N}}$ on 
	$\mathcal{M}^{\bm{N}}$ and let 
	$S \subseteq V$.
	A \textit{loop configuration on 
	$\mathcal{M}
	^{\bm{N}}$ outside $S$} is 
	\begin{itemize}
	\item defined on a directed multigraph with doubled 		
	edges 
	$\overrightarrow{\mathcal{M}}^{\bm{N}}$, and is 
	\item a collection of 
	     \begin{itemize}
	     \item unrooted directed loops on 
	     $
	     \overrightarrow{\mathcal{M}}^{\bm{N}}$ 
	     not visiting $S$, or
	     \item directed open paths on $	
	     \overrightarrow{\mathcal{M}}^{\bm{N}}$ 
	     not visiting $S$ except possibly at 
	     the start or end vertex such that   
	    \item every directed single edge is traversed exactly 
	    once by a loop or path 
	    \item every 'first' and 'second' edge of every 
	    directed doubled edge is traversed exactly once by a 
	    loop or path in their respective direction
	    \item at each vertex $v \in V \setminus S$, there are 
	    exactly $\delta ( \bm{n}+2 \bm{m_1} )_v$ outgoing 
	    paths if 
	    $\delta ( 
	    \bm{m} + 2 \bm{m_1} )_v > 0$ and $- 
	    \delta ( \bm{n}+2 \bm{m_1})_v$ 
	    incoming paths if $\delta ( \bm{n}+2 \bm{m_1} )_v < 0$.
    \end{itemize}
	\end{itemize}
	$\mathcal{L}^S_{\bm{N}}$ is the 
	set of all loop configurations on $\mathcal{M}
	^{\bm{N}}$ outside $S$.
	\end{definition}
	Next we want to assign a weight of loop configuration uniform among loop configurations defined on a specific
	$\overrightarrow{\mathcal{M}}^{\bm{N}}$. 
	\par 
	Let 
	$\deg_{\mathcal{M}^{\bm{N}}} 
	(u) 
	\coloneqq 
	\sum_{v \sim u} | \bm{n} |_{uv} 
	+ 2 | \bm{m_1} |_{uv}
	+ 2 | \bm{m_2} |_{uv}
	$ 
	be the degree of $\mathcal{M}^{\bm{N}}$ at vertex $u$.
	There are 
	\[ 
	\frac{ 
	(( \deg_{\mathcal{M}^{\bm{N}}}(v) + 	
	|\delta(\bm{n}+2 \bm{m_1})|_v)/
	2)!}
	{|\delta(\bm{n}+2 \bm{m_1})|	
	_v!}
	\] 
	ways to glue the incoming and outgoing edges together at vertex $v \in V \setminus S$ and therefore there are 
	\[ 
	\prod_{v \in V \setminus S}
	\frac{ 
	(( \deg_{\mathcal{M}^{\bm{N}}}(v) + 	
	|\delta(\bm{n}+2 \bm{m_1})|_v)/
	2)!}
	{|\delta(\bm{n}+2 \bm{m_1})|	
	_v!}
	\] 
	many loop configurations on $\overrightarrow{\mathcal{M}}^{\bm{N}}$.
	This suggests the following definition:
	
	\begin{definition}
	Let $\bm{N} = 
	\left( 
	\bm{n}, \bm{m_1}, \bm{m_2} 
	\right)$. 
	The weight for a loop configuration $\omega \in \mathcal{L}^S_{\bm{N}}$ is defined by
	\begin{multline}
	\label{eq: weight single loop}
	\lambda^S_{\bm{N}}(\omega) 
	\coloneqq
	\prod_{v \in V \setminus S} 
	\frac{|\delta(\bm{n}+2 \bm{m_1})|	
	_v!}
	{ 
	(( \deg_{\mathcal{M}^{\bm{N}}}(v) + 	
	|\delta(\bm{n}+2 \bm{m_1})|_v)/
	2)!}
	\\
	\prod_{uv \in E}
	\frac{ 
	(\beta J_1/ 2)^{s(\mathcal{M}^{\bm{N}})_{uv}}}
	{s(\mathcal{M}^{\bm{N}})_{uv}!}
	\frac{ (\beta J_2/4)^{d(\mathcal{M}
	^{\bm{N}})_{uv}}}
	{d(\mathcal{M}^{\bm{N}})_{uv}!}.
	\end{multline}
	\end{definition}
	
	With this weight,
	\begin{align*}
	w (\bm{N}) & = 
	\sum_{
	\overrightarrow{\mathcal{M}}^{\bm{N}} 
	\text{on 
	$\mathcal{M}^{\bm{N}}$
	}}
	\prod_{uv \in E}
	\frac{ (\beta J_1 / 2)^{s(\mathcal{M}^{\bm{N}})_{uv}}}
	{s(\mathcal{M}^{\bm{N}})_{uv}!}
	\frac{ (\beta J_2 / 4)^{d(\mathcal{M}
	^{\bm{N}})_{uv}}}
	{d(\mathcal{M}^{\bm{N}})_{uv}!} 
	\\
	& = 
	\sum_{
	\overrightarrow{\mathcal{M}}^{\bm{N}}
	\text{on 
	$\mathcal{M}^{\bm{N}}$}}
	\sum_{\omega 
	\
	\text{on 
	$\overrightarrow{\mathcal{M}}^{\bm{N}}$}}
	\lambda^S_{\bm{N}}(\omega) 
	\\
	& = 
	\sum_{\omega \in \mathcal{L}^S_{\bm{N}}}
	\lambda^S_{\bm{N}} (\omega).
	\end{align*}
	Let $\varphi, \psi : V \rightarrow \mathbb{Z}$. 
	Define 
	\[ 
	\Omega_\varphi \coloneqq \{ \bm{N} = (\bm{n},\bm{m_1}, 
	\bm{m_2}): \bm{n}, \bm{m_1}, \bm{m_2} \ 
	\text{currents and $\delta(\bm{n} + 2 \bm{m_2} ) = 
	\varphi$} \}.
	\]
	Then,
	\[
	\mathcal{L}^S_{\varphi} 
	\coloneqq 
	\bigcup_{\bm{N} \in \Omega_\varphi} 
	\mathcal{L}^S_{\bm{N}}.
	\]
	The union is disjoint, so each 
	$\omega \in \mathcal{L}^S_\varphi$ 
	is contained in exactly one 
	$\mathcal{L}^S_{\bm{N}}$.
	Hence the following weight on 
	$\mathcal{L}^S_\varphi$, 
	\[
	\lambda^S_\varphi (\omega) \coloneqq
	\lambda^S_{\bm{N}} (\omega) \ 
	\text{if} \ 
	\omega \in \mathcal{L}^S_{\bm{N}}
	\]
	is well-defined. 
	
	\begin{remark} 
	Loop configurations are defined analogously
	for general $\tilde{H} \in \tilde{\mathcal{H}}$.
	Denote the set of loop configurations on 
	$\mathcal{M}^{[S]_{\tilde{H}}}
	$
	by $\mathcal{L}^S_{[S]_{\tilde{H}}}$.
	The weight for $\omega \in \mathcal{L}^S_{[S]_{\tilde{H}}}
	$
	is then given by 
	\begin{multline*}
	\lambda^S_{[S]_{\tilde{H}}}
	(\omega) 
	= 
	\prod_{v \in V \setminus S} 
	\frac{ 
	| 
	\delta 
	\left( 
	\sum_{[s] \in [S]_{\tilde{H}}} 
	\divv ( [s] ) 
	\bm{n_{[s]}}
	\right)
	|_v 
	!}
	{
	\left( 
	\left( 
	\deg_{\mathcal{M}^{[S]_{\tilde{H}}}}(v)
	+
	| 
	\delta 
	\left( 
	\sum_{[s] \in [S]_{\tilde{H}}} 
	\divv ([s]) 
	\bm{n_{[s]}}
	\right) 
	|_v 
	\right) / 2
	\right) ! 
	}
	\\ 
	\prod_{uv \in E} 
	\prod_{k : J_k > 0 \
	\text{in} 
	\ 
	\tilde{H}}
	\frac{(\beta J_k/2^k)^{
	\sum_{ 
	[s_k] \in \{ \pm 1 \}^k / \sim} 
	| \bm{n_{[s_k]}}|_{uv} }
	}
	{
	\left(
	\sum_{ 
	[s_k] \in \{ \pm 1 \}^k / \sim} 
	| \bm{n_{[s_k]}}|_{uv}
	\right)!
	}.
	\end{multline*}
	\end{remark}
	
	With this notation and decomposition, we can follow
	\begin{align}
	Z^{\tilde{H}}_{G, \beta} & = 
	\sum_{\omega \in \mathcal{L}^S_0}
	\lambda^S_{0}(\omega)
	=
	\sum_{\omega \in \mathcal{L}_0^\emptyset} 	
	\lambda^\emptyset_0 (\omega) 
	\\
	\left( \sigma_a \bar{\sigma}_b 
	\right)^{\tilde{H}}_{G, \beta}
	& =
	\sum_{\omega \in \mathcal{L}^S_{\delta_a - \delta_b}}
	\lambda^S_{\delta_a - \delta_b} (\omega)
	\\
	\left( \sigma_a^2 \bar{\sigma}_b^2 
	\right)^{\tilde{H}}_{G, \beta}
	& =
	\sum_{\omega \in \mathcal{L}^S_{2(\delta_a - 
	\delta_b)}}
	\lambda^S_{2(\delta_a - \delta_b)} (\omega),
	\label{eq:two paths}
	\end{align}
	and so on. 
	Here, $0$ denotes the zero function on $V$. 
	\par
	$\lambda^\emptyset_0(\cdot)$ induces a probability 
	measure 
	$\textbf{P}^{\tilde{H}}_{G, \beta}$ on 
	$\mathcal{L}^\emptyset_0$.
	As long as $G$ is planar, its dual graph is well-defined and so is 
	$W_u(\omega)$ - the total net 
	winding of all the loops in loop configuration $
	\omega$ around a face $u \in V^\dagger$.
	Clearly,
	
	\begin{theorem}
	\label{th:winding}
	Let $\tilde{H} \in \tilde{\mathcal{H}}$.
	$(W_u)_{u \in V^\dagger}$ under 
	$\textbf{P}^{\tilde{H}}_{G, \beta}$ and 
	$(h(u))_{u \in V^\dagger}$ under 
	$\mathbb{P}^{\tilde{H}}_{G^\dagger,\beta}$ are identically 
	distributed.
	\end{theorem}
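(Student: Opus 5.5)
The plan is to push both measures forward to a common object, the tuple of currents $\bm{N}=(\bm{n},\bm{m_1},\bm{m_2})\in\Omega_0(G)$. We then check that the winding field and the height field are the same function of $\bm{N}$.

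\emph{Step 1: marginal of $\textbf{P}^{\tilde{H}}_{G,\beta}$ on currents.} Every $\omega\in\mathcal{L}^\emptyset_0$ lies in exactly one $\mathcal{L}^\emptyset_{\bm{N}}$ with $\bm{N}\in\Omega_0(G)$, by the disjointness of the union defining $\mathcal{L}^\emptyset_0$. The identity $w(\bm{N})=\sum_{\omega\in\mathcal{L}^\emptyset_{\bm{N}}}\lambda^\emptyset_{\bm{N}}(\omega)$ derived above (with $S=\emptyset$) then gives
\[
\textbf{P}^{\tilde{H}}_{G,\beta}\bigl(\omega\in\mathcal{L}^\emptyset_{\bm{N}}\bigr)=\frac{w(\bm{N})}{Z^{\tilde{H}}_{G,\beta}}=\mathbb{P}^{\tilde{H}}_{G,\beta}(\bm{N}).
\]
So the image of $\textbf{P}^{\tilde{H}}_{G,\beta}$ under $\omega\mapsto\bm{N}(\omega)$ is the current measure. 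By the bijection $\Omega_0(G)\leftrightarrow\Omega^\dagger_0(G^\dagger)$, this current measure is $\mathbb{P}^{\tilde{H}}_{G^\dagger,\beta}$ on height functions.

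\emph{Step 2: $W_u(\omega)=h^{\bm{N}(\omega)}(u)$ deterministically.} I would fix the normalisation $W_\partial=0$; this is automatic, since finite loops have zero winding around the exterior face. Take neighbouring faces $u,u'$ separated by a primal edge $vv'$, where the primal edge $(vv')$ crosses the dual edge $(uu')$ from right to left. For a single closed loop, $W_u-W_{u'}$ equals the signed number of its crossings of that edge: $+1$ for each traversal $v\to v'$ and $-1$ for each traversal $v'\to v$. This is the standard fact that the winding number jumps by the signed intersection number across the curve. I would check the sign convention once against the definition of $h$.

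Summing over all loops in $\omega$ uses the fact that every directed strand of $\overrightarrow{\mathcal{M}}^{\bm{N}}$ is traversed exactly once, in its own direction. Here a single edge counts as one strand and each doubled edge counts as two strands, its first and second edge. The sum is therefore
\[
W_u-W_{u'}=(\bm{n}_{(vv')}-\bm{n}_{(v'v)})+2({\bm{m_1}}_{(vv')}-{\bm{m_1}}_{(v'v)})+({\bm{m_2}}_{(vv')}-{\bm{m_2}}_{(vv')})+({\bm{m_2}}_{(v'v)}-{\bm{m_2}}_{(v'v)}).
\]
The $\bm{m_2}$ terms cancel, because each $\bm{m_2}$-edge has one strand in each direction. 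The right-hand side is exactly $h(u)-h(u')$. Since $G^\dagger$ is connected and both fields vanish at $\partial$, we get $W\equiv h^{\bm{N}(\omega)}$ on $V^\dagger$.

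Combining the two steps, $(W_u)_u=h^{\bm{N}(\omega)}$ has law equal to the pushforward of $\mathbb{P}^{\tilde{H}}_{G,\beta}$ on currents, which is $\mathbb{P}^{\tilde{H}}_{G^\dagger,\beta}$. For general $\tilde{H}\in\tilde{\mathcal{H}}$ the same argument applies: each $k$-edge contributes the net count of its labelled strands, which is $\pm\divv[s_k]$.

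The only genuinely delicate point is Step 2. One must make rigorous that the net winding of a loop configuration is additive over loops and is computed locally by edge crossings, even though loops may revisit vertices and share edges. The cleanest way is to define winding through the crossing count along a dual path from $u$ to $\partial$: the total winding equals the sum over primal edges crossed of the signed number of strand traversals. That sum is independent of the gluing at vertices, so the loop structure becomes irrelevant.
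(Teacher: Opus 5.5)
Your argument is correct and is essentially what the paper compresses into the word ``Clearly'' (it gives no written proof). Since $\sum_{\omega\in\mathcal{L}^\emptyset_{\bm N}}\lambda^\emptyset_{\bm N}(\omega)=w(\bm N)$, the loop measure projects onto the current measure, and the net winding jumps across each primal edge by the signed strand count, which is exactly the height increment. The signs agree with the paper's right-to-left convention if winding is counted counterclockwise-positive, and a global sign flip would be harmless anyway by the symmetry $h\mapsto -h$.

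One point of wording to fix: $\bm N\mapsto h^{\bm N}$ is many-to-one, not a bijection. What you actually use is that the pushforward of the current measure under this map is the Gibbs measure \eqref{eq:height def}. This follows by summing $w$ over each fibre edge by edge, which produces the Bessel-function potential $\mathcal{V}^{\tilde H}_\beta$.
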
	
	
	We are now in a position to compare the nematic order parameter $G_2(a-b)$ under $\mathbb{P}^{\tilde{H}}_{G, \beta}$ to probabilities of loop connections between $a$ and $b$ under $\bm{P}^{\tilde{H}}_{G, \beta}$.	
	The idea is to perform the operation of \textit{path switching}. 
	
	\begin{definition}[Path switching]	
	\label{def:path switching}	
	Let 
	$\varphi : V \rightarrow \mathbb{Z}$ such that 
	$\sum_{v \in V} \varphi_v = 0$.
	Suppose $\omega$ is a loop configuration in 
	$\mathcal{L}^S_\varphi$ with a path $\gamma$ going from $a$ to $b$, where $a,b \in S$.
	Let $r(\gamma)$ be the reversal of $\gamma$ and let 
	$R_\gamma(\omega)$ be the loop configuration identical to $\gamma$, except that the direction of the path $\gamma$ is reversed. 
	Then 
	\[ 
	\lambda^S_{\varphi} (\omega) 
	= 
	\lambda^S_{\varphi - 2(\delta_a - \delta_b)}
	(R_\gamma(\omega)) 
	\]
	because the underlying (undirected) multigraph with doubled edges and the number of outgoing edges minus incoming edges at each $v \in V \setminus S$ remains unchanged.
	\end{definition}
	
	\begin{remark} 
	\label{re:just tilde H}
	We wish to emphasize that the restriction to Hamiltonians $\tilde{\mathcal{H}}$ was necessary for the latter invariance under path switching to hold. 
	To be precise, loop configurations as in Definition \ref{def:loop conf} are defined on directed multigraphs with doubled edges, where (crucially) all four orientations of doubled edges can appear. 
	So switching a path $\gamma$ is firstly well-defined and secondly does not change the overall number of undirected doubled edges.
	\par  
	Had we rewritten the nematic terms 
	$\cos(2 (\theta_u - \theta_v)$ of $H_{\Delta}$ , $0 < \Delta < 1$,
	naively 
	as 
	\[ 
	\cos (2 (\theta_u - \theta_v) ) 
	= 
	\frac{
	(\sigma_u \bar{\sigma}_v)^2
	+ 
	(\bar{\sigma}_u \sigma_v)^2
	}{2},
	\]
	we would have obtained loop configurations on directed multigraphs with doubled edges, where only those directed doubled edges with both its first and second edge oriented in the same direction appear.
	Switching a path would then be ill-defined.
	The same holds for general $H \in \mathcal{H}$.
	\end{remark}

	\begin{lemma}
	\label{th:hilfslemma 1}
	Let $\tilde{H} \in \tilde{\mathcal{H}}$ with at least one $J_k > 0$ and $a$ and $b$ be two distinct vertices.
	Then,
	\begin{equation} 
	\label{eq:bound corr 1}
	\langle 
	\sigma_a^2 \bar{\sigma}_b^2 
	\rangle^{\tilde{H}}_{G, \beta}
	= 
	\bm{E}^{\tilde{H}}_{G,\beta}
	\left[
	\frac{m_{a,b}}{m_{a,b} + 1} 
	\chi_{\{m_{a,b} \geq 1\}} \right] 
	=
	\bm{E}^{\tilde{H}}_{G,\beta}
	\left[
	\frac{m_{a,b}}{m_{a,b} + 1} \right].
	\end{equation}
	Therefore,
	\begin{equation}
	\label{eq:bound corr}
	\frac{1}{2} 
	\bm{P}^{\tilde{H}}_{G,\beta}(m_{a,b} \geq 1)
	\leq
	\langle 
	\sigma_a^2 \bar{\sigma}_b^2 
	\rangle^{\tilde{H}}_{G, \beta}
	\leq
	\bm{P}^{\tilde{H}}_{G,\beta}(m_{a,b} \geq 1).
	\end{equation}
	\end{lemma}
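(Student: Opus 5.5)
The plan is to write both $Z^{\tilde{H}}_{G,\beta}$ and $(\sigma_a^2\bar\sigma_b^2)^{\tilde{H}}_{G,\beta}$ as sums over loop configurations that are \emph{cut open} at $S=\{a,b\}$, then compare them term by term using path switching (Definition \ref{def:path switching}). I read $m_{a,b}$ as the number of directed path segments from $a$ to $b$ obtained when the loops of $\omega\in\mathcal{L}^\emptyset_0$ are cut at every visit to $a$ and $b$; equivalently, the number of $a\to b$ transitions along the loops. The argument should not depend much on the exact convention.

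\emph{Step 1 (cutting).} By the display after the definition of $\lambda^S_\varphi$, for any $S$ we have $Z^{\tilde{H}}_{G,\beta}=\sum_{\omega\in\mathcal{L}^S_0}\lambda^S_0(\omega)$, and similarly
\[
(\sigma_a^2\bar\sigma_b^2)^{\tilde{H}}_{G,\beta}=\sum_{\omega\in\mathcal{L}^S_{2(\delta_a-\delta_b)}}\lambda^S_{2(\delta_a-\delta_b)}(\omega)
\]
by \eqref{eq:two paths}. With $S=\{a,b\}$, no gluing happens at $a$ or $b$, so the vertex factors at $a,b$ drop out of $\lambda^S$. Each $\omega'\in\mathcal{L}^S_0$ is a collection of loops avoiding $a,b$ together with open paths between $a$ and $b$ (and $a\to a$, $b\to b$ excursions). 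It arises from exactly
\[
\frac{((\deg(a)+|\delta|_a)/2)!}{|\delta|_a!}\cdot\frac{((\deg(b)+|\delta|_b)/2)!}{|\delta|_b!}
\]
glued configurations $\omega\in\mathcal{L}^\emptyset_0$, and these weight factors cancel exactly. So $\sum_{\omega'\in\mathcal{L}^S_0}\lambda^S_0(\omega')$ is the partition function, organised by the cut configuration.

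\emph{Step 2 (switching).} Given $\omega'\in\mathcal{L}^S_0$ with at least one path $\gamma$ from $b$ to $a$, reversing $\gamma$ gives $R_\gamma(\omega')\in\mathcal{L}^S_{2(\delta_a-\delta_b)}$ with the same weight. Conversely, every configuration of divergence $2(\delta_a-\delta_b)$ has an excess $a\to b$ path that can be reversed. This step is exactly where $\tilde{\mathcal H}$ is needed (Remark \ref{re:just tilde H}). Counting preimages gives a weighted bijection. A configuration in $\mathcal{L}^S_{2(\delta_a-\delta_b)}$ with $k+1$ paths $a\to b$ and $k$ paths $b\to a$ comes from $k+1$ choices of the path to reverse, while its images in $\mathcal{L}^S_0$ have $k$ paths in each direction. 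Re-gluing the $a\to b$ and $b\to a$ segments into loops of $\omega\in\mathcal{L}^\emptyset_0$, with $m_{a,b}=k$ segments each way, the ratio of numerator to denominator weight should be $m/(m+1)$. The factor arises from the ratio of the vertex factors $|\delta|!/((\deg+|\delta|)/2)!$ at $a$ and $b$ for $|\delta|=2$ versus $0$, combined with the $k$ versus $k+1$ switching multiplicities. Summing then gives $\bm{E}[\frac{m_{a,b}}{m_{a,b}+1}]$, and the indicator $\chi_{\{m_{a,b}\ge1\}}$ is automatic because the integrand vanishes at $m=0$.

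\emph{Step 3 (bounds).} \eqref{eq:bound corr} is immediate from \eqref{eq:bound corr 1}, since $\frac12\le\frac m{m+1}\le1$ for $m\ge1$. I also note that the correlation is real and equals $\langle\cos(2(\theta_a-\theta_b))\rangle$.

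The main obstacle is the bookkeeping in Step 2. I must check that the combinatorial factors from (i) orderings of single and doubled edges on each $\overrightarrow{\mathcal M}^{\bm N}$, (ii) gluings at $a,b$, and (iii) the choice of which path to switch combine to exactly $m/(m+1)$, rather than some other function of the local degrees. I would first verify this on a single pair of vertices joined by a few edges, and then fix the precise definition of $m_{a,b}$ so that the identity holds.
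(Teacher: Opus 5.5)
Your outline matches the paper's: cut at $S=\{a,b\}$, reverse one path via Definition~\ref{def:path switching}, then glue back at $a,b$ using the $(\deg/2)!$ cancellation. But Step 2, which carries all the content of the lemma, is not correct as written, and you leave it unverified.

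First, the path count is off. In $\mathcal{L}^{\{a,b\}}_{2(\delta_a-\delta_b)}$ one has $m_{a,b}=m_{b,a}+2$, not $m_{b,a}+1$. With your count ($k+1$ paths $a\to b$, $k$ paths $b\to a$), reversing an $a\to b$ path leaves $k$ versus $k+1$, which is not an element of $\mathcal{L}^{\{a,b\}}_0$.

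Second, you attribute part of the factor $m/(m+1)$ to the vertex factors at $a$ and $b$ for $|\delta|=2$ versus $|\delta|=0$. These factors do not occur in $\lambda^{\{a,b\}}$, since the product runs over $V\setminus S$. Wherever they do occur, they cancel exactly against the number of gluings at that vertex. Taken at face value, their ratio $\frac{2}{\deg_{\mathcal{M}^{\bm{N}}}(a)/2+1}$ (and similarly at $b$) would give exactly the degree-dependent answer you worry about. So the identity is conjectured rather than derived.

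The missing step is a plain double count over pairs. $\lambda^{\{a,b\}}$ depends only on the undirected multigraph with doubled edges and on $|\delta|$ off $\{a,b\}$, which is zero on both sides. Hence $(\omega,\gamma)\mapsto (R_\gamma(\omega),r(\gamma))$ is a weight-preserving bijection from pairs with $\omega\in\mathcal{L}^{\{a,b\}}_{2(\delta_a-\delta_b)}$, $\gamma\in\mathcal{P}_{a,b}(\omega)$, onto pairs with $\omega'\in\mathcal{L}^{\{a,b\}}_0$, $\gamma'\in\mathcal{P}_{b,a}(\omega')$. Under this map $m_{a,b}(\omega)=m_{a,b}(\omega')+1$ and $m_{b,a}(\omega')=m_{a,b}(\omega')$. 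Write each $\lambda(\omega)$ as $\sum_{\gamma\in\mathcal{P}_{a,b}(\omega)}\lambda(\omega)/m_{a,b}(\omega)$ and transport through the bijection. This gives
\[
\sum_{\omega\in\mathcal{L}^{\{a,b\}}_{2(\delta_a-\delta_b)}}\lambda^{\{a,b\}}_{2(\delta_a-\delta_b)}(\omega)
=
\sum_{\omega'\in\mathcal{L}^{\{a,b\}}_0}\frac{m_{a,b}(\omega')}{m_{a,b}(\omega')+1}\,\lambda^{\{a,b\}}_0(\omega'),
\]
so the factor comes entirely from the $m+1$ versus $m$ switching multiplicities. Only after this do you glue at $a,b$ as in your Step 1. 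There you use that gluing does not change the number of $a\to b$ excursions that avoid $a,b$ in between, which is the precise definition of $m_{a,b}$ on $\mathcal{L}^\emptyset_0$.
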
 	
	
	\begin{proof}
	The proof is analogous to Lemma 8 of \cite{lis:2023}. 	\par 
	Let $\varphi = 2(\delta_a - \delta_b)$.
	By \eqref{eq:two paths} we have
	\begin{equation}
	\label{eq:ausgeschrieben}
	    \langle \sigma_a^2 \bar{\sigma}_b^2 \rangle^{\tilde{H}}_{G, \beta}
	= 
	\frac{\sum_{\omega \in \mathcal{L}^{\{a,b\}}_\varphi}
	\lambda^{\{a,b\}}_\varphi
	(\omega)
	}
	{Z^{\tilde{H}}_{G, \beta}}
	\end{equation}
	
	We want to make use of the fact that $\omega \in \mathcal{L}^{\{a,b\}}_{\varphi}$ are those loop configurations with two more paths going from $a$ to $b$ than from $b$ to $a$. 
	If we switch one of the two additional paths from $a$ to $b$ we will end up with a loop configurations with as many paths from $a$ to $b$ as from $b$ to $a$ in the new loop configuration. 
	However there must be at least one path from $b$ to $a$. 
	By Definition \ref{def:loop conf} the weight is invariant under this switching.
	\par 
	To make everything precise we define $\mathcal{P}_{a,b}(\omega)$ to be the set of paths in $\omega$ going from $a$ to $b$. 
	Moreover, let
	\[
	m_{a,b}(\omega) = |\mathcal{P}_{a,b} (\omega)|.
	\]
	Each 
	$\omega \in \mathcal{L}^{\{a,b\}}_\varphi$ 
	induces 
	$m_{a,b}(\omega)$ pairs 
	$(\omega, \gamma)$, where 
	$\gamma \in \mathcal{P}_{a,b}(\omega)$. 
	\\
	Then the mapping 
	\begin{align*}
	R: 
	\{ (\omega, \gamma): 
	\omega \in \mathcal{L}^{\{a,b\}}_\varphi, 
	\gamma \in \mathcal{P}_{a,b}(\omega)
	%,m_{a,b}(\omega) \geq 2 
	\}
	& \rightarrow
	\{ (\omega, \gamma): 
	\omega \in \mathcal{L}^{\{a,b\}}_0, 
	\gamma \in \mathcal{P}_{b,a}(\omega),
	m_{a,b}(\omega) \geq 1
	\}
	\\
	(\omega, \gamma) 
	& \mapsto 
	(R_\gamma (\omega), r(\gamma))
	\end{align*}
	is well-defined.
	In each 
	$\omega \in \mathcal{L}^{\{a,b\}}_\varphi$ 
	there are two more paths from $a$ to $b$ than from $b$ to $a$, so
	\[
	m_{a,b}(\omega) 
	= m_{b,a}(\omega) + 2 
	\geq 2.
	\] 
	After reversing a path $\gamma \in \mathcal{P}_{a,b}(\omega)$ there are as many paths from $a$ to $b$ as from $b$ to $a$ but at least one path of both kinds, so 
	\[
	m_{a,b}(R_\gamma(\omega) )
	= m_{b,a}(R_\gamma(\omega) )
	\geq 1.
	\]
	The mapping $R$ is bijective. 
	Crucially,
	\begin{equation}
	\label{eq:weight unchanged}
	\lambda^{\{a,b\}}_\varphi (\omega)
	=
	\lambda^{\{a,b\}}_0(R_\gamma(\omega)) \
	\text{for} \
	\omega \in \mathcal{L}^{\{a,b\}}_\varphi, 
	\gamma \in \mathcal{P}_{a,b}(\omega),
	\end{equation}
	because the underlying undirected multigraph with doubled edges as well as the number of outgoing minus incoming edges outside $S= \{a,b\}$
	remains unchanged. 
	\\
	Now we can compute,
	\begin{align}
	\sum_{\omega \in \mathcal{L}^{\{a,b\}}_\varphi}
	\lambda^{\{a,b\}}_\varphi 
	(\omega)
	& = 
	\label{eq:hilf-1}
	\sum_{\omega \in \mathcal{L}^{\{a,b\}}_\varphi}
	\sum_{\gamma \in \mathcal{P}_{a,b}(\omega)}
	\frac{1}{m_{a,b}(\omega)}
	\lambda^{\{a,b\}}_\varphi 
	(\omega) 
	\\
	& = 
	\label{eq:hilf-2}
	\sum_{
	\substack{
	\omega \in \mathcal{L}^{\{a,b\}}_0: 
	\\
	m_{a,b}(\omega) \geq 1}}
	\sum_{\gamma \in \mathcal{P}_{b,a}(\omega)}
	\frac{1}{m_{a,b}(\omega) + 1}
	\lambda^{\{a,b\}}_0 (\omega) 
	\\
	& = 
	\label{eq:hilf-3}
	\sum_{
	\substack{
	\omega \in \mathcal{L}^{\{a,b\}}_0: 
	\\
	m_{a,b}(\omega) \geq 1}}
	\frac{m_{a,b}(\omega)}{m_{a,b}(\omega) + 1}
	\lambda^{\{a,b\}}_0 (\omega) 
	\\
	& =
	\label{eq:hilf-4}
	\sum_{
	\substack{
	\omega \in \mathcal{L}^{\{a,b\}}_0: 
	\\
	m_{a,b}(\omega) \geq 1}}
	\frac{m_{a,b}(\omega)}{m_{a,b}(\omega) + 1}
	\sum_{\omega' \in p^{-1}(\omega)}
	\lambda_0^\emptyset
	(\omega') 
	\\
	& = 
	\label{eq:hilf-5}
	\sum_{
	\substack{ 
	\omega' \in \mathcal{L}^\emptyset_0 
	\\
	m_{a,b}(\omega') \geq 1}}
	\frac{m_{a,b}(\omega')}{m_{a,b}(\omega') + 1}
	\lambda_0^\emptyset
	(\omega').
	\end{align}
	The second equality follows from the bijectivity of $R$, 
	$m_{a,b} (\omega) = m_{a,b}(R_\gamma(\omega)) + 1$
	and \eqref{eq:weight unchanged}, 
	the third one 
	because of 
	$m_{a,b}(\omega) = m_{b,a}(\omega)$ for $\omega \in \mathcal{L}^{\{a,b\}}_0$
	and 
	the fourth one because of the consistency property,  
	$p : \mathcal{L}^\emptyset_0 \rightarrow 
	\mathcal{L}^{\{a,b\}}_0$ cuts loop and path connections at $a$ and $b$.
	For the last equality note that $\omega' \in \mathcal{L}^\emptyset_0$, $m_{a,b}(\omega')$ refers to the number of pieces of loops going from $a$ to $b$ without visiting $a$ or $b$ in between. 
	If $\omega' \in p^{-1}(\omega)$, 
	$\omega \in \mathcal{L}^{\{a,b\}}_0$, then $m_{a,b}(\omega) = m_{a,b}(\omega')$. 
	\par 
	Inserting \eqref{eq:hilf-1} - \eqref{eq:hilf-5} into \eqref{eq:ausgeschrieben} proves \eqref{eq:bound corr 1}.
	\par 
	\eqref{eq:bound corr} follows by noting that 
	$\frac{1}{2} 
	\leq 
	\frac{m_{a,b}}{m_{a,b} + 1}
	\leq 1 
	$ 
	if $m_{a,b} \geq 1$.
	\end{proof}	
	
	As a by-product of the loop representation we record the following correlation inequality, which might turn out useful in the further study of generalized XY models.
	
	\begin{theorem}
	\label{th:corr ineq}
	Let $\tilde{H} \in \tilde{\mathcal{H}}$ and fix $a,b,c \in V$. 
	Let $B \subseteq V$ separate $a$ and $c$ in the sense that every path connecting $a$ to $c$ must pass trough a vertex of $b \in B$. 
	Then, 
	\begin{equation} 
	\langle 
	\cos (\theta_a + \theta_c - 2 \theta_b) 
	\rangle^{\tilde{H}}_{G, \beta}
	\leq 
	\langle 
	\cos (\theta_a - \theta_c) 
	\rangle^{\tilde{H}}_{G, \beta}
	\leq 
	\sum_{b \in B} 
	\langle 
	\cos (\theta_a + \theta_c - 2 \theta_b) 
	\rangle^{\tilde{H}}_{G, \beta}.
	\end{equation}
	\begin{proof}[Proof of Theorem \ref{th:corr ineq}]
	For the first inequality set $S = \{ b \}$.
	The first inequality is equivalent to 
	\begin{align*}
	( \sigma_a \sigma_c \bar{\sigma}^2_b )^{\tilde{H}}_{G, \beta}
	\leq 
	( \sigma_a \bar{\sigma}_c )^{\tilde{H}}_{G, \beta} 
	\\
	& \Leftrightarrow 
	\sum_{\omega \in \mathcal{L}^S_{\delta_a + \delta_c - 2 \delta_b}}
	\lambda^S_{\delta_a + \delta_c - 2 \delta_b} (\omega)
	\leq 
	\sum_{\omega \in \mathcal{L}^S_{\delta_a - \delta_c}}
	\lambda^S_{\delta_a - \delta_c} (\omega).
	\end{align*}
	If $\omega \in \mathcal{L}^S_{\delta_a + \delta_c - 2 \delta_b}$, then there exists a unique path $\gamma$ starting in $c$ and ending in $b$. 
	Switching this path maps 
	\[
	\omega \mapsto r_\gamma(\omega) \in \mathcal{E}_b
	\coloneqq 
	\{ \omega \in  \mathcal{L}^S_{\delta_a - \delta_c}: \ b \xrightarrow{\omega} c \}.
	\]
	One can recover $\omega$ by switching the reversed path $r(\gamma)$ in $r_\gamma(\omega)$ back, meaning $\omega = r_{r(\gamma)} (r_\gamma(\omega))$.
	This way we see that there is a one-to-one correspondence between
	\[
	\mathcal{L}^S_{\delta_a + \delta_c - 2 \delta_b} \ \text{and} \
	\mathcal{E}_b. 
	\]
	The weight remains unchanged under this switching because 
	$| (\delta_a - \delta_c)_v| = |(\delta_a + \delta_c - 2 \delta_b)_v|$ for all $v \in V \setminus \{b \}$. 
	\\
	Subsequently,
	\begin{align*}
	\sum_{\omega \in \mathcal{L}^S_{\delta_a + \delta_c - 2 \delta_b}}
	\lambda^S_{\delta_a + \delta_c - 2 \delta_b} 	(\omega)
	& =
	\sum_{ \omega \in \mathcal{E}_b}
	\lambda^S_{\delta_a - \delta_c} (\omega) 
	\\
	& \leq
	\sum_{ \omega \in \mathcal{L}^S_{\delta_a - \delta_c}} 
	\lambda^S_{\delta_a - \delta_c} (\omega) 
	\\
	& = 
	( \sigma_a \bar{\sigma}_c )^{\tilde{H}}_{G, \beta}.
	\end{align*}
	\par 
	For the second inequality set $S = B$.
	We can decompose
	\[
	\mathcal{L}^B_{\delta_a - \delta_c} 
	=
	\bigcup_{b \in B}
	\{
	\omega \in \mathcal{L}^B_{\delta_a - \delta_c} : 
	a \xrightarrow{\omega} b \},
	\]
	where the union is disjoint.
	Set $\mathcal{E}_b \coloneqq \{
	\omega \in \mathcal{L}^B_{\delta_a - \delta_c} : 
	a \xrightarrow{\omega} b \}$.
	Denote the unique path from $a$ to $b$ in $\omega \in \mathcal{E}_b$ by $\gamma$. 
	Switching this path maps 
	$\omega \mapsto r_\gamma(\omega) \in 
	\{ 
	\omega \in \mathcal{L}^B_{2 \delta_b - \delta_a - \delta_c} 
	:
	b \xrightarrow{\omega} a \} =
	\bar{\mathcal{E}}_b$. 
	By switching the unique path in $\bar{\mathcal{E}}_b$ from $b$ to $a$ back, we see that $\mathcal{E}_b$ and $\bar{\mathcal{E}}_b$ are in one-to-one correspondence. 
	\\
	The weights remain unchanged under this switching because 
	\[
	| (2 \delta_b - \delta_a - \delta_c)_v | 
	=
	| (\delta_a - \delta_c)_v |
	\]
	for all $v \in V \setminus B$.
	We conclude by
	\begin{align*}
	\left( \sigma_a \bar{\sigma}_c \right)^{\tilde{H}}_{G, \beta}
	 & = 
	\sum_{b \in b}
	\sum_{\omega \in \mathcal{E}_b}
	\lambda^B_{\delta_a - \delta_c}
	(\omega) 
	 = 
	\sum_{b \in B}
	\sum_{\omega \in \bar{\mathcal{E}}_b}
	\lambda^B_{2 \delta_b - \delta_a - \delta_c}
	(\omega) 
	\\
	 & \leq 
	\sum_{b \in B}
	\sum_{\omega \in \mathcal{L}^B_{2 \delta_b - \delta_a - \delta_c}}
	\lambda^B_{2 \delta_b - \delta_a - \delta_c}
	(\omega) 
	\\
	 & =
	\sum_{b \in B}
	\left(
	\sigma_a \sigma_c \bar{\sigma}^2_b 
	\right)^{\tilde{H}}_{G, \beta}.
	\end{align*}
	\end{proof}	
	\end{theorem}
	
	\subsection{Delocalisation and correlation}
	A \textit{cut at face $\bm{o}$} is a bi-infinite self-avoiding path in $
	\Gamma$ that goes through at least one edge incident 	
	to $\bm{o}$.
	As before, choosing an orientation and fixing a vertex incident 
	to $\bm{o}$ and visited by the cut, the cut splits into to 	infinite sets of vertices $L_{+}$ and $L_{-}$ - 				$L_{+}$ being the set of vertices that the cut visits 	before hitting the fixed vertex (including the fixed 			vertex) and $L_{-}$ being the set of vertices that 			the cut visits after hitting the fixed vertex.
	Crucially, each loop surrounding $\bm{o}$ must visit $L_+$ 
	as well as $L_-$.
	Let 
	\begin{align*}
	\chi^\epsilon_{\Gamma, \beta}
	(L) 
	& \coloneqq
	\sum_{a \in L_+ , b \in L_-}
	\left(
	\langle 
	\cos ( 2 (\theta_a - \theta_b) )
	\rangle^{\tilde{H}}_{\Gamma, 	
	\beta}
	\right)^{1 - \epsilon/2}
	\\ 
	& =
	\sum_{a \in L_+ , b \in L_-}
	G_2(a-b)^{1 - \epsilon/2}
	\
	,0 < \epsilon < 2,
	\end{align*}
	be the sum of the nematic order parameter along a cut $L$. 
	
	\begin{theorem}
	\label{cor:deloc impl no exp}
	Fix a Hamiltonian $\tilde{H} \in \tilde{\mathcal{H}}$.
	If the height function delocalises in the sense of 	
	\eqref{eq:deloc var}, then for all cuts $L$ and 
	$0 < \epsilon < 2$, 
	\begin{equation}
	\label{eq:statement cut}
	\chi^\epsilon_{\Gamma, \beta}(L) = \infty.
	\end{equation}
	In particular, $G_2$ does not decay exponentially in the distance.
	\end{theorem}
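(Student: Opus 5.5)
We follow the strategy of van Engelenburg and Lis \cite{lis:2023}: bound $\mathbb{E}|h(\bm{o})|$ from above by the sum of the nematic order parameter along the cut, and then use delocalisation to force that sum to diverge. Fix $N$ and work on $G = B_N$.

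By Theorem \ref{th:winding}, $h(\bm{o})$ under $\mathbb{P}^{\tilde{H}}_{B_N^\dagger,\beta}$ has the law of the net winding $W_{\bm{o}}$ under $\bm{P}^{\tilde{H}}_{B_N,\beta}$. Every loop with nonzero winding around $\bm{o}$ crosses the cut, so it must visit both $L_+$ and $L_-$. Cut each such loop at its visits to $L_+\cup L_-$. Each unit of winding then forces at least one loop piece running from some $a\in L_+$ to some $b\in L_-$ without visiting $a$ or $b$ in between; this is exactly the count $m_{a,b}$ in the sense of Lemma \ref{th:hilfslemma 1}, applied with $\omega'\in\mathcal{L}^\emptyset_0$. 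This gives the pathwise bound
\[
|W_{\bm{o}}| \le \sum_{a\in L_+\cap B_N,\ b\in L_-\cap B_N} m_{a,b}.
\]
One has to check that this holds even though loops may revisit the cut many times. The point is that successive crossings of the cut alternate between the two sides, so each net crossing contributes a piece between consecutive visits to $L_+$ and to $L_-$.

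The moment bound is the main obstacle. Lemma \ref{th:hilfslemma 1} controls $\bm{P}(m_{a,b}\ge1)$ by $2\langle\sigma_a^2\bar\sigma_b^2\rangle$, but we need $\mathbb{E}[m_{a,b}]$. We therefore use H\"older's inequality with exponents $p = 2/\epsilon$ and $q = 2/(2-\epsilon)$, and write $m = m\chi_{\{m\ge1\}}$:
\[
\bm{E}[m_{a,b}] \le \bm{E}\bigl[m_{a,b}^{2/\epsilon}\bigr]^{\epsilon/2}\,\bm{P}(m_{a,b}\ge1)^{1-\epsilon/2} \le C\,G_2^{B_N}(a-b)^{1-\epsilon/2}.
\]
This step also needs a uniform bound on $\bm{E}[m_{a,b}^{2/\epsilon}]$. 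We would obtain it by dominating $m_{a,b}$ by the local amplitude $\deg_{\mathcal{M}}(a)$. Its moments are bounded uniformly in $N$ and $a$ by the single-site comparison $\bm{E}[\deg(a)^p]\le C(p,\beta,\tilde{J},\Gamma)$, since the currents at the edges of $a$ have Poisson/Bessel-type tails with parameters bounded by $\beta\max_k \tilde J_k$ and the degree of $\Gamma$ is bounded. This is the step we expect to require the most care. If it fails, the fallback is the alternative used in \cite{lis:2023}: bound $\mathbb{E}|h|^{\epsilon}$ instead of $\mathbb{E}|h|$, applying H\"older to $|h|$ directly.

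To conclude, combine the two bounds and use the monotonicity in $G$ from Remark \ref{re:gin}, which gives $\langle\cos 2(\theta_a-\theta_b)\rangle_{B_N}\le G_2(a-b)$. This yields
\[
\mathbb{E}|h(\bm{o})| \le C\,\chi^\epsilon_{\Gamma,\beta}(L)
\]
uniformly in $N$. Delocalisation gives $\mathbb{E}|h(\bm{o})|\to\infty$, as noted at the start of Section \ref{sec:loop repr}, so \eqref{eq:statement cut} follows.

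Finally, if $G_2(v)\le Ce^{-c|v|}$, then since $\Gamma$ is periodic with bounded degree, the pairs $a\in L_+$, $b\in L_-$ with both $a$ and $b$ at distance about $k$ from the fixed vertex number only polynomially many in $k$. Moreover $|a-b|$ is comparable to $\max(\operatorname{dist}(a),\operatorname{dist}(b))$, up to self-avoidance geometry of the cut, which we would handle by choosing a straight-line cut. Hence $\chi^\epsilon$ would be finite, a contradiction, so $G_2$ does not decay exponentially.
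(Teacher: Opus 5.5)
Your proposal is correct and is essentially the paper's proof. It runs through the same steps: the winding identity of Theorem \ref{th:winding}, the pathwise bound $|W_{\bm o}|\le\sum_{a\in L_+,\,b\in L_-}m_{a,b}$, H\"older with $1/p=1-\epsilon/2$ combined with a uniform moment bound on the currents at $a$, then Lemma \ref{th:hilfslemma 1} and Ginibre monotonicity, and finally the contrapositive via a straight-line cut. The moment bound you flag as delicate is exactly Lemma \ref{le:hilfslemma2}, which the paper proves edge by edge using the Bessel/Poisson-type tail $\mathbb{P}^{\tilde H}_{G,\beta}(|\bm n|_{uv}=2k+l,\ |\nabla h_{\bm n}|_{uv}=l)\le 2(\beta J_1/2)^{2k+l}/(k!(k+l)!)$ and its analogues for $\bm{m_1},\bm{m_2}$, so your fallback is not needed.
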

	
	In the proof of Theorem \ref{cor:deloc impl no exp} we make use of the following lemma, whose proof is postponed to the end of the section.
	
	\begin{lemma}
	\label{le:hilfslemma2}
	Let $\tilde{H} \in \tilde{\mathcal{H}}$ with at least one $J_k > 0$.
	For all finite graph $G=(V,E)$, $\beta > 0$ and $p > 
	1$, there exists $C_p < \infty$ (independent of 
	$G$) such that 
	\begin{equation}
	\label{eq:ineq lemma}
	\bm{E}^{\tilde{H}}_{G, \beta} 
	\left[
	m_{a,b} 
	\right]
	\leq
	C_p
	\deg_{G}(a) 
	\left(
	\bm{P}^{\tilde{H}}_{G,\beta}
	(m_{a,b} \geq 1)
	\right)^{1/p}
	.
	\end{equation}
	\end{lemma}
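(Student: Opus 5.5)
The approach is Hölder's inequality followed by a uniform exponential tail bound on $m_{a,b}$, adapting Lemma 9 of \cite{lis:2023}. With $q$ the conjugate exponent of $p$,
\[
\bm{E}^{\tilde{H}}_{G,\beta}[m_{a,b}]
= \bm{E}^{\tilde{H}}_{G,\beta}\big[m_{a,b}\chi_{\{m_{a,b}\ge 1\}}\big]
\le \big(\bm{E}^{\tilde{H}}_{G,\beta}[m_{a,b}^q]\big)^{1/q}
\big(\bm{P}^{\tilde{H}}_{G,\beta}(m_{a,b}\ge 1)\big)^{1/p}.
\]
It therefore suffices to show that $\big(\bm{E}[m_{a,b}^q]\big)^{1/q}\le C_p\deg_G(a)$ with $C_p$ independent of $G$.

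Every piece of loop counted by $m_{a,b}$ leaves $a$ along an edge of the multigraph. Hence $m_{a,b}$ is at most the total number of (single, or halves of doubled, or sub-edges of $k$-)edges incident to $a$, namely $\sum_{v\sim a}\big(|\bm n|_{av}+2|\bm{m_1}|_{av}+2|\bm{m_2}|_{av}\big)$, and in general $\sum_{k}k\sum_{[s_k]}|\bm n_{[s_k]}|_{av}$. By Minkowski's inequality,
\[
\|m_{a,b}\|_q\le \sum_{v\sim a}\big\|\,\text{(number of edges on } av)\big\|_q.
\]
This reduces the lemma to a bound $\|\text{edges on } e\|_q\le C_p$ that is uniform in $G$ and in the edge $e$.

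The main step is this uniform moment bound on the number of edges on a fixed edge $e=uv$ under the current measure $\mathbb{P}^{\tilde{H}}_{G,\beta}$, which is the law of the multigraph under $\bm{P}$. I would condition on all currents off $e$. The conditional law of the currents on $e$ is then proportional to the product weights in \eqref{eq:weight current}, restricted to a fixed value of the net flow $\delta$ across $e$ (the divergence constraint fixes the weighted net flow $n_{(uv)}-n_{(vu)}+2(m_{1,(uv)}-m_{1,(vu)})$, or its general analogue).

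I would bound the conditional probability of having total amplitude $\ge N$ by a ratio of sums of Poisson-type weights $x^{j}/(j!\,(j+\ell)!)$. The numerator is bounded using $j!\,(j+\ell)!\ge$ factorial growth, while the denominator is at least the minimal term with the given flow. Uniformly in the constraint this should give a tail of the form $C(\beta J)^N/(\lfloor N/2\rfloor!)^{c}$, hence super-exponential decay, and all moments are then bounded by constants depending only on $\beta$ and the $\tilde J_k$.

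The delicate point is uniformity in the conditioned net flow $\ell$. The ratio $I_{j+\ell}$-type sums against their leading term must be controlled for arbitrary $\ell$. I expect to use monotonicity of Bessel ratios, $I_{\ell+1}(x)/I_\ell(x)\le x/(2(\ell+1))$ and $\sum_i (x/2)^{2i}\ell!/(i!(i+\ell)!)\le e^{x^2/4}$. These give bounds on the excess $|\bm n|-|\delta|$, and the excess is what matters once the minimal amplitude $|\ell|$ is accounted for.

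The minimal part $|\ell|$ itself is not uniformly bounded conditionally. I would handle it by bounding $\bm{E}[|h(x)-h(y)|^q]$ for adjacent faces directly. Convexity and symmetry of the potential $\mathcal V^{\tilde H}_\beta$, as established in Theorem \ref{th:height deloc}'s setting, together with a uniform lower bound on $\mathcal V(k)-\mathcal V(0)$ growing like $k\log k$, give a uniform bound on the gradient moments via the DLR resampling of a single edge gradient. Combining the two pieces gives $C_p$ and finishes the proof.
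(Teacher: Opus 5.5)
Your reduction (H\"older, bounding $m_{a,b}$ by the number of sub-edges at $a$, then Minkowski) is fine and matches the paper's, up to the order in which H\"older is applied. Your conditional control of the excess over the net flow is also plausible.

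The gap is the control of the net flow $\ell=h(x)-h(y)$ across $e$, which you need uniformly in $G$. You propose to get it from convexity of $\mathcal V^{\tilde H}_\beta$ plus a DLR argument. That does not work, for two reasons.

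First, convexity of $\mathcal V^{\tilde H}_\beta$ is not established in the paper. The proof of Theorem~\ref{th:height deloc} only uses convexity for the single-coupling Hamiltonian $H^\ast$ and passes to $\tilde H$ by a Ginibre-type comparison. It is also false in general. Take $\tilde H$ as in \eqref{eq:H example} with $J_2>0$ fixed and $J_1\to 0$. Then $e^{-\mathcal V(1)}$ is of order $\beta J_1$, while $e^{-\mathcal V(0)}$ and $e^{-\mathcal V(2)}$ are of order one, so $2\mathcal V(1)\le\mathcal V(0)+\mathcal V(2)$ fails. This already covers $\tilde H_\Delta$ for small $\Delta$.

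Second, even granting convexity, there is no DLR move that resamples a single gradient. Heights live on faces, and resampling $h(x)$ moves every gradient around $x$ at once. A single-site bound therefore says nothing about one prescribed $|h(x)-h(y)|$ unless you already control the spread of the neighbouring heights, and that is again a gradient bound. So the step meant to close your argument is unsupported.

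The missing observation is that no conditioning is needed. Fix $e=uv$ and prescribe the $\bm n$-currents on $e$, say $\bm n_{(uv)}=k+l$ and $\bm n_{(vu)}=k$. Deleting them is a bijection from the tuples in $\Omega_0(G)$ with this configuration on $e$ onto the tuples with $\delta(\bm n+2\bm m_1)=l(\delta_v-\delta_u)$ and $|\bm n|_{uv}=0$. It divides the weight by exactly $(\beta J_1/2)^{2k+l}/(k!\,(k+l)!)$.

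The total weight of the image is the integral of a function of modulus one against $e^{-\beta\tilde H'}$, where $\tilde H'$ is $\tilde H$ with the $J_1$-term on $e$ removed. Hence it is at most $\int e^{-\beta\tilde H'}$, which is the sourceless sum with $|\bm n|_{uv}=0$ and so at most $Z^{\tilde H}_{G,\beta}$. This gives
\[
\mathbb P^{\tilde H}_{G,\beta}\bigl(\bm n_{(uv)}=k+l,\ \bm n_{(vu)}=k\bigr)\le \frac{(\beta J_1/2)^{2k+l}}{k!\,(k+l)!}
\]
uniformly in $G$, which controls the excess $2k$ and the net flow $l$ simultaneously. The same argument works for $\bm m_1$. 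The current $\bm m_2$ does not enter the divergence constraint and is just a family of independent Poisson variables.

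This is the paper's argument, and it makes your Bessel-ratio analysis unnecessary. Applied to all currents on $e$ at once, it also yields $\mathbb P(h(x)-h(y)=\ell)\le C\,e^{-\mathcal V^{\tilde H}_\beta(\ell)}$. That is exactly the uniform gradient bound you were after, obtained without any convexity.
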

	
	\begin{proof}[Proof of Theorem \ref{cor:deloc impl no exp}]
	The proof is analogous to Proposition 9 of \cite{lis:2023}.
	\par 
	Suppose $G_2$
	decays exponentially in the distance, meaning there 
	exists $c, C>0$ such that
	\[
	\langle 
	\cos( 2 (\theta_u - \theta_v ))
	\rangle^{\tilde{H}}_{\Gamma, \beta}
	\leq 
	C 
	e^{-c |u-v| },
	\]
	then for any cut $L$ that is parallel to a straight line 
	\begin{equation}
	\label{eq:cuts infinity}
	\chi^\epsilon_{\Gamma, \beta} (L)
	\leq
	C
	\sum_{a \in L_+, b \in L_-}
	e^{-c (1-\epsilon/2) |a-b| }
	=
	C
	\sum_{n = 1}^\infty 
	n e^{- c (1-\epsilon/2) n}
	< \infty, 
	\end{equation}
	a contradiction.
	Vice versa, if we can show that $
	\chi^\epsilon_{\Gamma, \beta}(L) = \infty$ 
	for all cuts $L$, then $G_2$ does not decay 
	exponentially. 
	\par 
	With Lemma \ref{th:hilfslemma 1} and \ref{le:hilfslemma2} it is straightforward to 
	show 
	\eqref{eq:statement cut}:
	\\
	Recall that $W_{\bm{o}}(\omega)$ is the total net winding 
	around the distinguished face $\bm{o}$ of a loop 
	configuration $\omega$.
	Let $L$ be a cut at $\bm{o}$ in the infinite graph $
	\Gamma$ and $(B_N)_{N \geq 0}$, $(B^\dagger_N)_{N \geq 0}$ be defined as in the beginning of Section \ref{sec:deloc of height}. 
	Then,
	\begin{align}
	\mathbb{E}^{\tilde{H}}_{B^\dagger_N, \beta}
	\left[
	|h(\bm{o})|
	\right] 
	\label{eq:final equ 1}
	&=
	\bm{E}^{\tilde{H}}_{B_N, \beta}
	\left[
	|W_{\bm{o}}|
	\right] 
	\\
	&
	\leq
	\sum_{a \in L_+, b \in L_-}
	\bm{E}^{\tilde{H}}_{G_N, \beta}
	\left[
	m_{a,b}	
		\right]
	\label{eq:final equ 3}
	\\
	& 
	\leq 
	C 
	\sum_{a \in L_+, b \in L_-}
	\bm{P}^{\tilde{H}}_{B_N, \beta}
	\left(
	m_{a,b} \geq 1 
	\right)^{1/p}
	\label{eq:final equ 4} 
	\\
	& \leq
	\tilde{C} 
	\sum_{a \in L_+, b \in L_-}
	\langle 
	\sigma^2_a \bar{\sigma}^2_b 
	\rangle^{1/p}_{B´_N, \beta}
	\label{eq:final equ 5} 
	\\
	& \leq
	\tilde{C}
	\chi^\epsilon_{\Gamma, \beta}(L)
	\label{eq:final equ 6}
	\end{align}
	The first line is due to	Theorem \ref{th:winding} - 
	$W(\bm{o})$ under 
	$\bm{P}^{\tilde{H}}_{G, \beta}$ 
	is identically distributed to $h(\bm{o})$ under 
	$\mathbb{P}^{\tilde{H}}_{G, \beta}$. 
	For the second line note that each piece of loop that adds to the net winding 
	$W_{\bm{o}}(\omega)$ of a loop configuration $\omega$ must 
	intersect $L_+$ and $L_-$. 
	By a slight abuse of notation, we only sum over paths 
	intersecting the finite graph $B_N$. 
	The third line is Lemma \ref{le:hilfslemma2} with $1 / p = 1 - \epsilon / 2$, the fourth one is
	Lemma \ref{th:hilfslemma 1}, while 
	the last one is the second Ginibre inequality, Lemma \ref{le:ginibre}. 
	So if \eqref{eq:deloc} holds, then 
	$\chi^\epsilon_{\Gamma, \beta}(L) = 
	\infty$ for all cuts $L$.
	\end{proof}	
	\begin{proof}[Proof of Lemma \ref{le:hilfslemma2}]
	Fix $\beta > 0$, a finite graph $G=(V,E)$ and let $a,b \in V$. 
	Consider a loop configuration 
	$\omega \in \mathcal{L}^\emptyset_0$ on $G$.
	For simplicity let 
	$\tilde{H}$ be of the form \eqref{eq:H example}.
	A loop in $\omega$ traverses an edge $uv \in E$ either through a directed single edge or a directed doubled edge.
	Let 
	\begin{itemize}
	\item $\omega^s_{uv}$ 
	be the number of times that $\omega$ traverses $uv$ 
	through a single edge,
	\item $\omega^{d_1}_{uv}$ be the number of times that $\omega$ traverses $uv$ through a doubled edge, whose edges face the same direction and
	\item $\omega^{d_2}_{uv}$ 
	be the number of times that $\omega$ traverses $uv$ 
	through a doubled edge, whose edges face opposite 
	directions.
	\end{itemize}
	Note that 
	\[
	\sum_{c \sim a}
	\omega^s_{ac} +
	\omega^{d_1}_{ac} +
	\omega^{d_2}_{ac}
	\geq 
	m_{a,b}.
	\]
	Hence,
	\begin{multline*}
	\bm{E}^{\tilde{H}}_{G,\beta}
	\left[
	m_{a,b}
	\right]
	\leq
	\bm{E}^{\tilde{H}}_{G,\beta}
	\left[
	\sum_{c \sim a}
	\left(
	\omega^s_{ac} +
	\omega^{d_1}_{ac} +
	\omega^{d_2}_{ac}
	\right)
	\chi_{\{
	m_{a,b} \geq 1 \}}
	\right] 
	\\
	\leq 	
	\deg_G(a)
	\max_{c \sim a}
	\bm{E}^{\tilde{H}}_{G,\beta}
	\left[
	\left(
	\omega^s_{ac} +
	\omega^{d_1}_{ac} +
	\omega^{d_2}_{ac}
	\right)
	\chi_{\{
	m_{a,b} \geq 1 \}}
	\right] 
	\\
	\leq 	
	\deg_G(a)
	\max_{c \sim a}
	\left(
	\bm{E}^{\tilde{H}}_{G,\beta}
	\left[ 
	(\omega^s_{ac})^q
	\right]^{1/q}
	+
	\bm{E}^{\tilde{H}}_{G,\beta}
	\left[ 
	(\omega^{d_1}_{ac})^q
	\right]^{1/q}	
	+
	\bm{E}^{\tilde{H}}_{G,\beta}
	\left[ 
	(\omega^{d_2}_{ac})^q
	\right]^{1/q}
	\right) 
	\bm{P}^{\tilde{H}}_{G,\beta}(m_{a,b} \geq 1)^{1/p},
	\end{multline*}
	where the last inequality follows by H\"{o}lder with $1/p + 1/q = 1$. 
	\par 
	\eqref{eq:ineq lemma} follows if we are able to bound 
	$\bm{E}^{\tilde{H}}_{G,\beta}
	\left[ 
	(\omega^s_{ac})^q
	\right]$, respectively 
	$\bm{E}^{\tilde{H}}_{G,\beta}
	\left[ 
	(\omega^{d_1}_{ac})^q
	\right]
	$ and 
	$
	\bm{E}^{\tilde{H}}_{G,\beta}
	\left[ 
	(\omega^{d_2}_{ac})^q
	\right]$, 
	by a constant depending on $q$, $\beta$, $J_1$ and $J_2$ but not on the underlying graph $G$.
	\par 
	Note that for $e = uv$,
	\begin{align*}
	\bm{E}^{\tilde{H}}_{G,\beta}
	\left[ 
	(\omega^s_{uv})^q
	\right] 
	& =
	\mathbb{E}^{\tilde{H}}_{G,\beta} 
	\left[
	| \bm{n} |^q_{uv}
	\right] ,
	\\
	\bm{E}^{\tilde{H}}_{G,\beta}
	\left[ 
	(\omega^{d_{1/2}}_{uv})^q
	\right] 
	& =
	\mathbb{E}^{\tilde{H}}_{G,\beta} 
	\left[
	| 2 \bm{m}_{\bm{1}/\bm{2}} |^q_{uv}	
	\right].
	\end{align*}
	Define 
	$
	| \nabla h_{\bm{n}}  |_{uv}
	\coloneqq 
	| \bm{n}_{(uv)} - \bm{n}_{(vu)} |
	$
	for any current $\bm{n}$.
	Intuitively, this describes the difference between the number of single or doubled directed edges from $u$ to $v$ and its respective reversals. 
	\par
	Now 
	\begin{align*}
	\mathbb{E}^{\tilde{H}}_{G, \beta} 
	\left[ 
	|\bm{n}|^q_{uv} 
	\right] 
	& = 
	\sum_{i \geq 0} 
	i^q 
	\mathbb{P}^{\tilde{H}}_{G, \beta} 
	\left( 
	|\bm{n}|_{uv} = i
	\right) 
	\\
	& 
	= 
	\sum_{l \geq 0}
	\sum_{i \geq 0} 
	i^q 
	\mathbb{P}^{\tilde{H}}_{G, \beta} 
	\left( 
	|\bm{n}|_{uv} = i, 
	|\nabla h_{\bm{n}}|_{uv} = l 
	\right) 
	\\
	&= 
	\sum_{l \geq 0}
	\sum_{k \geq 0} 
	(2k+l)^q 
	\mathbb{P}^{\tilde{H}}_{G, \beta} 
	\left( 
	|\bm{n}|_{uv} = 2k+l, 
	|\nabla h_{\bm{n}}|_{uv} = l 
	\right) .
	\end{align*}
	Letting 
	\begin{align*}
	\Omega_l 
	& \coloneqq 
	\bigl\{ 
	\bm{N}:
	\delta 
	(\bm{n} + 2 \bm{m_1} ) =
	l(\delta_u - \delta_v) \bigr\} 
	\ \text{and} 
	\\
	\tilde{\Omega}_l 
	& \coloneqq 
	\bigl\{ 
	\bm{N}:
	\delta 
	(\bm{n} + 2 \bm{m_1} ) =
	l(\delta_u - \delta_v), 
	| \bm{n} |_{uv} = 0 \bigr\},
	\end{align*}
	we see that 
	\begin{align*}
	\mathbb{P}^{\tilde{H}}_{G, \beta} 
	\left( 
	|\bm{n}|_{uv} = 2k+l, 
	|\nabla h_{\bm{n}}|_{uv} = l 
	\right)
	& = 
	\frac{ 
	2
	\sum_{\bm{N} \in \tilde{\Omega}_l} 
	w(\bm{N}) 
	\frac{(\beta J_1/2)^{2k+l}}{k!(k+l)!} 
	}
	{\sum_{\bm{N} \in \Omega_0} 
	w(\bm{N})} 
	\leq 
	2 \frac{(\beta J_1/2)^{2k+l}}{k!(k+l)!} 
	\end{align*}
	Therefore 
	\begin{align*} 
	\mathbb{E}^{\tilde{H}}_{G, \beta}
	\left[ 
	|\bm{n}|^q_{uv} 
	\right] 
	& 
	\leq 
	2 
	\sum_{l \geq 0} 
	\sum_{k \geq 0} 
	\frac{(\beta J_1/2)^{2k+l}} 
	{k! (k+l)!}
	\\ 
	& = 
	\sum_{k \geq 0} 
	\sum_{m \geq k} 
	\frac{(\beta J_1/2)^{l + m} 
	}{k! m!}
	\\ 
	&
	\leq 
	 \sum_{k \geq 0} 
	\sum_{m \geq 0} 
	\frac{(\beta J_1/2)^{l + m} 
	}{k! m!}
	\\ 
	& 
	= 
	\sum_{n \geq 0} 
	\frac{(\beta J_1)^n}{n!},
	\end{align*}
	where we performed a change of variables
	and used the identity $\sum_{k = 0}^n {n \choose k} = 2^n$.
	The latter series is recognised to be the \textit{$q$-th Touchard polynomial} multiplied by $e^{\beta J_1}$. 
	The $q$-th Touchard polynomial is formally defined as 
	$T_q(\lambda) = e^{-\lambda} \sum_{n \geq 0} 
	\frac{\lambda^n 
	n^q}{n!}
	$
	and evaluates the $q$-th moment of a random variable $X \sim \Poi(\lambda)$.
	In particular this quantity is finite and depends only on $\beta$ and $J_1$.
	\par 
	Analogous bounds hold for $\bm{E}^{\tilde{H}}_{G, \beta} 
	\left[ 
	(\omega^{d_1/d_2}_{uv})^q 
	\right] 
	$. 
	\par 
	This extends in straightforward fashion to Hamiltonians $\tilde{H} \in \tilde{\mathcal{H}}$.
	\end{proof} 
\section{Proofs of the main results}
	\label{sec:main results} 
	\begin{proof}[Proof of Theorem \ref{th:main theorem 1}]
	The first part is Theorem \ref{cor:deloc impl no exp}.
	The second part is Theorem \ref{th:height deloc}. 
	\end{proof}
	\begin{proof}[Proof of Theorem \ref{th:main theorem 2}]	
	Recall that $\beta^{XY}_c$ is the critical inverse temperature of the standard XY model.
	Let 
	\begin{align*} 
	H_{\Delta,1} 
	& \coloneqq
	- 
	\sum_{u \sim v} 
	\Delta 
	\cos (\theta_u - \theta_v) 
	\\ 
	H_{\Delta,2} 
	& \coloneqq
	- 
	\sum_{u \sim v} 
	(1-\Delta) 
	\cos \left( 2 (\theta_u - \theta_v) \right).
	\end{align*}
	For $\Delta \in (0, 1)$ and $i = 1,2$,
	\begin{equation}
	\label{eq:G_i}
	G_i(v) \geq 
	\langle 
	\cos 
	\left( 
	i 
	(
	\theta_o - \theta_v 
	)
	\right) 
	\rangle_{\Gamma, \beta}^{H_{\Delta,i}}
	\end{equation}
	by Lemma \ref{le:ginibre}.
	Now for the existence of the nematic phase at 
	$\Delta \approx 0$ take $\beta > \beta^{XY}_c$ and choose
	$\tilde{\Delta}$ small enough such that $\beta (1 - \tilde{\Delta}) > \beta^{XY}_c$. 
	This yields a power-law lower bound of the form
	\[
	\langle 
	\cos 
	\left( 
	2
	(
	\theta_o - \theta_v 
	)
	\right) 
	\rangle_{\Gamma, \beta}^{H_{\tilde{\Delta},2}}
	\geq 
	\frac{1}{8 |v|}. 
	\]
	Subsequently, 
	$
	\langle 
	\cos 
	\left( 
	2
	(
	\theta_o - \theta_v 
	)
	\right) 
	\rangle_{\Gamma, \beta}^{H_{\Delta',2}}$
	possesses the same power-law lower bound for all 
	$
	\Delta' \leq \tilde{\Delta}$.
	\par
	By Theorem \ref{th:comp AT} we can upper bound
	\[
	\langle 
	\cos 
	\left( 
	\theta_o - \theta_v 
	\right) 
	\rangle_{\Gamma, \beta}^{H_{\Delta',1}}
	\leq 
	\langle 
	s_o s_v
	\rangle^{AT}_{\Gamma, \beta} ,
	\]
	where one the right-hand side the expectation is of the AT model with $J = \frac{\Delta'}{2}$ and $U = 1 - \Delta'$. 
	\par 
	By \eqref{eq:crit temp},  
	$\beta_c	^{(1)} \xrightarrow{\Delta' \rightarrow 0} \infty$, where 
	$\beta_c^{(1)}$ is the critical inverse temperature for the two-point function of the Ashkin--Teller model with $J = \frac{\Delta'}{2}$ and $U = 1 - \Delta'$. 
	Subsequently choosing $\Delta_0(\beta)$ small enough, 
	it holds that for all $0 \leq \Delta' \leq \Delta_0(\beta)$,
	\[ 
	\langle 
	s_o s_v
	\rangle^{AT}_{\Gamma, \beta},
	\]
	decays exponentially,
	where the expectation is still of the AT model with 
	$J = \frac{\Delta'}{2}$ and $U = 1 - \Delta'$.
	This finishes the proof.
	\end{proof}

\bibliographystyle{plain}
\bibliography{biblio}

\end{document}